\documentclass[letterpaper,journal]{IEEEtran}
\usepackage{amsmath,amsfonts}
\usepackage{array}
\usepackage{textcomp}
\usepackage{stfloats}
\usepackage{placeins}
\usepackage{url}
\usepackage{verbatim}
\usepackage{graphicx}
\def\BibTeX{{\rm B\kern-.05em{\sc i\kern-.025em b}\kern-.08em
    T\kern-.1667em\lower.7ex\hbox{E}\kern-.125emX}}
\usepackage{balance}
\usepackage{cite}

\usepackage{amssymb,amsthm,latexsym}
\usepackage{bm}
\usepackage{setspace}
\usepackage{tikz}
\usepackage[ruled]{algorithm2e}

\newcommand{\rmb}{\mathrm{b}}
\newcommand{\rmt}{\mathrm{t}}
\DeclareMathOperator{\tr}{tr}
\newtheorem{theorem}{Theorem}
\newtheorem{proposition}[theorem]{Proposition}

\graphicspath{{Figures/}}

\begin{document}

\title{Polarforming for MIMO Covert Communications}

\author{Xin Xie, \IEEEmembership{Member,~IEEE,}
	Jinpeng Xu, \IEEEmembership{Member,~IEEE,}
	Yihang Huang, \IEEEmembership{Member,~IEEE,} \\
    Li~Zhou,~\IEEEmembership{Senior Member,~IEEE,}
    Zhaolong~Ning,~\IEEEmembership{Senior Member,~IEEE}
	\thanks{
		(\textit{Corresponding author}: Jinpeng Xu.)

        Xin Xie, Yihang Huang, and Zhaolong Ning are with the School of Communications and Information Engineering, Chongqing University of Posts and Telecommunications, Chongqing 400065, China (e-mail: xiex@cqupt.edu.cn, huangyihang@cqupt.edu.cn, z.ning@ieee.org).

        Jinpeng Xu is with the Department of Electronic and Computer Engineering, The Hong Kong University of Science and Technology, Hong Kong SAR (e-mail: eejinpengxu@ust.hk).
        
		Li Zhou is with the College of Electronic Science and Technology, National University of Defense Technology, Changsha 410073, China (e-mail: zhouli2035@nudt.edu.cn).
	}
}

\maketitle

\begin{abstract}
Covert communication conceals wireless transmission activity, but conventional multi-antenna designs rely mainly on spatial beamforming and can be constrained by limited spatial degrees of freedom. This paper investigates polarization-reconfigurable antenna (PRA)-aided multiple-input multiple-output (MIMO) covert communication, where Alice and Bob perform transmit and receive polarforming, respectively, and Willie uses a vertically polarized antenna for radiometric detection. We jointly optimize the transmit covariance matrix and the transmit and receive phase shift vectors to maximize Bob's achievable rate under the covertness and transmit power constraints. Under Gaussian signaling, perfect polarized channel state information, and known noise power at Willie, we derive his optimal radiometric threshold and exact minimum detection error probability (DEP) for a finite observation length, and convert the DEP requirement into a deterministic constraint on his received signal power. We further prove that, for a fixed transmission design, positive signal leakage becomes detectable as the observation length tends to infinity. To solve the design problem for a finite observation length, we develop an alternating optimization (AO) algorithm with a generalized water-filling covariance update and closed-form phase updates. Numerical results validate the exact DEP analysis and the rapid convergence of the proposed algorithm. Compared with the sufficient divergence-based condition, the exact covertness constraint permits a $0.99$--$1.12$~dB higher received signal-to-noise ratio (SNR) at Willie, while joint transmit and receive polarforming provides a rate gain of up to $53.8\%$ over the considered benchmark schemes.
\end{abstract}

\begin{IEEEkeywords}
	Polarization-reconfigurable antennas, polarforming, covert communication, detection error probability.
\end{IEEEkeywords}

\section{Introduction}
Sixth-generation (6G) wireless networks are evolving into adaptive,
programmable ecosystems that integrate communications, sensing,
computation, and control and support applications such as embodied
intelligence, digital twins, and real-time industrial automation
\cite{6g_ai_2026,xu2026digital,Xu2026IntelligentFANETs}. Meanwhile, heterogeneous network architectures that integrate terrestrial,
aerial, satellite, and maritime infrastructures are further expanding the
scale and complexity of wireless connectivity
\cite{11207211,xu2026sagsin_outage_capacity}.
These developments make trustworthiness, encompassing security, privacy, and resilience, a fundamental system requirement \cite{10970466,11164329}. Among its privacy dimensions, unobservability is particularly important at the signal level. Although cryptographic protection \cite{10286081,10844934} and secrecy-oriented physical-layer security \cite{XJP_MIMOCovert,10949678,11534098} techniques can safeguard message content, they do not necessarily conceal the occurrence of a transmission, which may itself reveal sensitive communication activity.

Covert communication has been widely investigated to meet the transmission unobservability requirements of emerging wireless systems \cite{10440606,10195191,10938035,11365966}. Among the enabling techniques, multi-antenna beamforming is widely used to enhance reception at the legitimate receiver, Bob, while limiting the signal observed by the warden, Willie. However, conventional beamforming operates only in the spatial domain and may be constrained when the available spatial degrees of freedom (DoFs) are limited \cite{11235492,xu2026exploitinguavattitudecovert}.

Beyond the spatial domain, polarforming exploits the complementary polarization DoF by adapting the polarization state of each antenna. In practical propagation environments, reflection, diffraction, scattering, antenna misalignment, and imperfect cross-polar isolation can change the polarization state of an electromagnetic wave and cause polarization mismatch at the receiver \cite{Oestges_TWC_2008}. Conventional fixed-polarization antennas cannot adapt to these variations, whereas fully digital dual-polarized arrays generally require a separate radio-frequency (RF) chain for each polarization component. A phase-shifter-based polarization-reconfigurable antenna (PRA) offers a cost-effective implementation by combining two orthogonally polarized elements through a single RF chain and controlling their relative phase \cite{zhou2024polarforming,11347516}. Accordingly, polarforming can match the received polarization at an intended receiver or deliberately create polarization mismatch at an unintended receiver, without relying exclusively on spatial nulling.

These capabilities motivate further investigation of polarforming for multiple-input multiple-output (MIMO) covert communication. In particular, how spatial signaling should be coordinated with transmit and receive polarforming, and how the resulting design affects a warden's detection performance for a finite observation length, remain insufficiently understood. The main contributions of this paper are summarized as follows:
\begin{itemize}
    \item \emph{PRA-Aided MIMO Covert Communication Framework:} We establish a PRA-aided MIMO framework that jointly exploits transmit and receive polarforming for covert communication. The model incorporates antenna coupling, channel depolarization, cross-polar leakage, MIMO covariance signaling, and the vertically polarized projection observed by Willie. Based on this model, we jointly optimize the transmit covariance matrix and the transmit and receive phase shift vectors (PSVs) to maximize Bob's achievable rate.
    \item \emph{Exact Covertness Characterization:} Under Gaussian signaling and known noise power at Willie, we characterize covertness in two observation regimes. For a finite observation length, we derive Willie's optimal radiometric threshold and exact minimum detection error probability (DEP), whose monotonicity in his received signal-to-noise ratio (SNR) yields an equivalent received signal power constraint. As the observation length tends to infinity, we prove that any positive signal leakage becomes detectable and that a fixed transmission design satisfies a nontrivial covertness requirement only when the leakage is zero.
    \item \emph{Joint Transceiver Optimization:} For a prescribed finite observation length, we develop an alternating optimization (AO) algorithm that alternates among the transmit covariance matrix and the transmit and receive PSVs. The covariance update admits a generalized water-filling structure under the transmit power and received signal power constraints, while each phase is updated using closed-form candidates that preserve feasibility. The resulting objective sequence is nondecreasing and convergent.
    \item \emph{Numerical Validation and Design Insights:} Numerical results validate the exact DEP analysis and the rapid convergence of the proposed AO algorithm. They show that the exact covertness constraint is less conservative than the divergence-based alternative, while joint transmit and receive polarforming outperforms the considered benchmark schemes and better accommodates variations in the covertness requirement and polarized channel composition.
\end{itemize}

The remainder of this paper is organized as follows. Section~\ref{sec:related_work} reviews polarforming for communication and security and physical-layer covert communication design. Section~\ref{sec:system_model} presents the PRA-aided MIMO communication and detection models. Section~\ref{sec:covertness_analysis} derives the finite-observation-length and asymptotic covertness characterizations. Section~\ref{sec:optimization} develops the proposed joint optimization algorithm. Section~\ref{sec:numerical_results} presents numerical results, and Section~\ref{sec:conclusion} concludes the paper.

\emph{Notation:}
Bold uppercase and lowercase letters denote matrices and vectors, respectively. For a matrix $\bm A$, $\bm A^T$, $\bm A^H$, $\tr(\bm A)$, and $\det(\bm A)$ denote its transpose, conjugate transpose, trace, and determinant, respectively; $\bm A^{1/2}$ denotes the Hermitian square root when $\bm A$ is positive semidefinite. The operators $\operatorname{diag}(\cdot)$ and $\operatorname{blkdiag}(\cdot)$ construct diagonal and block-diagonal matrices, respectively, and $\odot$ denotes the Hadamard product. The notation $\bm A\succeq\bm0$ indicates that $\bm A$ is positive semidefinite, $\|\cdot\|$ denotes the Euclidean norm, $[x]^+\triangleq\max\{x,0\}$, and $j=\sqrt{-1}$. The set $\mathbb T$ denotes the phase circle with elements represented modulo $2\pi$. The notation $\mathcal{CN}(\bm0,\bm\Sigma)$ represents a circularly symmetric complex Gaussian distribution with zero mean and covariance matrix $\bm\Sigma$. The functions $\Gamma(\cdot)$, $\Gamma(\cdot,\cdot)$, and $\gamma(\cdot,\cdot)$ are the complete, upper incomplete, and lower incomplete Gamma functions, respectively.

\section{Related Work}\label{sec:related_work}
The present work combines polarforming with covert communication. Related studies in these two areas are reviewed below.

\subsection{Polarforming for Communication and Security}
Polarization diversity has long been exploited through dual-polarized channel modeling and antenna architectures \cite{Oestges_TWC_2008}. Unlike a conventional dual-polarized antenna that independently processes two polarization ports, phase-shifter-based polarforming uses a PRA to adapt the relative phase between orthogonally polarized elements while retaining one RF chain per antenna. The initial polarforming framework characterized the resulting MIMO channel capacity and optimized the phase shift vectors of the transmit and receive PRAs \cite{zhou2024polarforming}. Dynamic polarization matching and mismatching can further mitigate depolarization and adapt the effective channel without requiring separate RF chains for the two polarization components \cite{11347516}.

Recent work has extended polarforming beyond point-to-point rate maximization. Joint position and polarization optimization was studied for movable antenna systems \cite{11206360}, while a polarforming antenna capable of electrical polarization control and mechanical position/rotation adjustment was developed for sensing-assisted communications \cite{11165349}. Three-dimensional polarforming with three orthogonally polarized elements was further considered for uplink multiuser fairness in multipath channels \cite{11357497}. Other extensions include joint polarforming and precoding based on statistical channel state information (CSI) for massive MIMO low Earth orbit satellite links \cite{11558360}, and optimization that accounts for polarization in reconfigurable intelligent surface (RIS)-assisted flexible intelligent metasurfaces \cite{11547205}. These studies focus on communication or sensing performance and do not consider physical-layer security or covert communication.

Secure wireless communication via polarforming was studied in \cite{11235492}, where the transmit beamforming vector and phase shift vector are jointly optimized to maximize the secrecy rate in the presence of an eavesdropper. More recently, \cite{11603394} jointly optimized transmit beamforming, transmit polarforming phase shifts, and finite blocklength for a multiple-input single-output (MISO) link with a fixed-polarization receiver and warden to maximize covert throughput under a sufficient divergence-based covertness constraint. However, that study is limited to transmit polarforming and rank-one beamforming in the MISO setting, and its sufficient divergence-based condition does not characterize Willie's exact minimum DEP for a prescribed observation length. Receive polarforming and general MIMO covariance signaling are also not considered.

\subsection{Physical-Layer Covert Communication}
Covert communication strategies have been developed across the time, frequency, spatial, power, and modulation domains \cite{en_covert_2025}. Existing studies have examined how propagation environments and warden capabilities affect covert communication design. RIS-assisted covert communication has been analyzed jointly with secrecy in rate-splitting non-orthogonal multiple access (NOMA) systems \cite{10086651} and with ambient backscatter relaying in cooperative networks \cite{10923651}. Warden collusion has been studied by characterizing its impact on covert throughput \cite{10879382}, while location and interference uncertainty have been incorporated into air-to-ground covert communication design \cite{11503892}. Covert multichannel communication against interference-assisted proactive detection has also been studied through joint optimization of the number of subchannels and transmit power \cite{10521454}. Collectively, these studies show that covertness must account for both environmental uncertainty and increasingly capable wardens.

Another line of work enhances covert communication through array processing and reconfigurable antenna architectures. Multi-antenna covert communication has been investigated for finite-blocklength millimeter-wave links with spatially random wardens \cite{10185203}. Other array and precoding designs include random frequency-diverse arrays \cite{10244106}, finite-alphabet MIMO precoding \cite{11215610}, and joint communication and jamming beamforming for wardens at multiple possible positions \cite{11313487}. Movable antennas have also been used to jointly design sensing and covert communication \cite{11417854}, while extremely large fluid antenna systems have been considered for the joint optimization of covert communication, secrecy, and sensing \cite{11365966}. Together, these approaches extend antenna-domain covert design from conventional precoding to frequency, position, and aperture reconfiguration.

In aerial networks, the three-dimensional mobility of UAVs can induce
pronounced space--time--frequency non-stationarity in mmWave and massive-MIMO
aerial channels \cite{Xu2022UAVChannel}. Against such dynamic propagation
conditions, cooperative uncrewed aerial vehicle (UAV) jamming has been combined
with trajectory, power, and time allocation for covert communication
\cite{10195191}. Joint secret and covert communication has also been investigated
for multiuser multi-antenna uplink UAV systems through joint beamforming and UAV
trajectory optimization \cite{10938035}. Interference uncertainty in UAV-to-UAV
covert communication has been analyzed using stochastic geometry
\cite{11122485,He-CovertUAV-2026}, while dynamic scheduling and sensing-based informed jamming
have been developed for multi-UAV systems \cite{11214184}. Reliable covert
communication in NOMA-aided cognitive satellite--aerial--terrestrial integrated
networks has further been studied through aerial relay scheduling
\cite{11141469}. These works demonstrate that both dynamic aerial propagation
and mobility-aware resource allocation are important considerations for covert
communication in UAV networks.

Overall, covert antenna design remains dominated by spatial processing. Existing covert polarforming is limited to transmit polarization optimization in a MISO finite-blocklength model under a sufficient divergence-based constraint, leaving receive polarforming, MIMO covariance signaling, and exact finite-observation-length DEP analysis unaddressed.

\begin{figure*}[ht]
	\centering
	\includegraphics[width=1\linewidth]{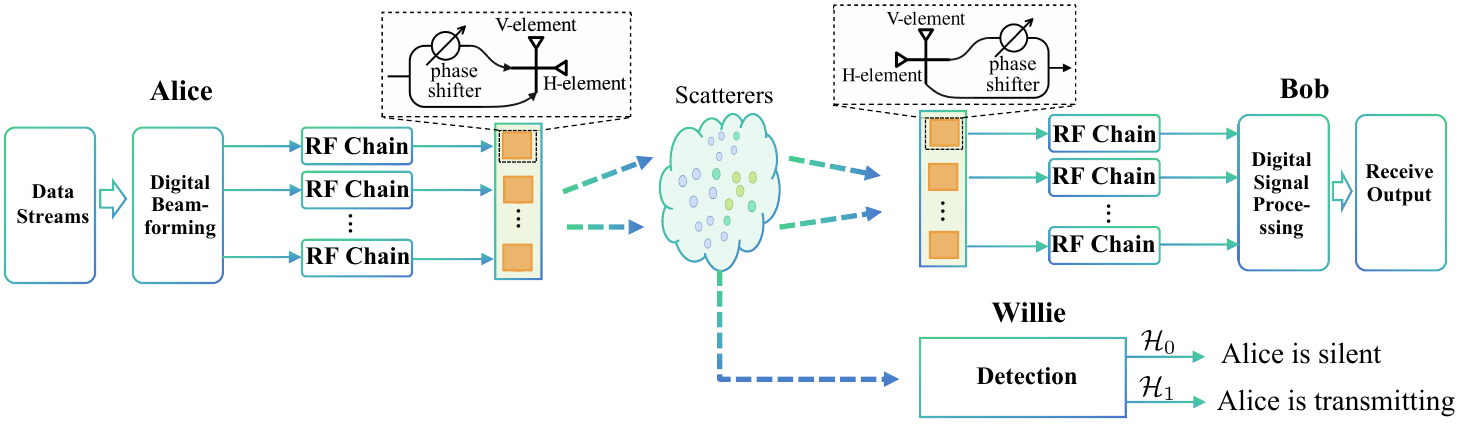}
	\caption{System model for PRA-aided MIMO covert communication.}
	\label{PRA_covert_System_Model}
\end{figure*}

\section{System Model}\label{sec:system_model}
As illustrated in Fig.~\ref{PRA_covert_System_Model}, we consider a PRA-aided MIMO covert communication system comprising a transmitter (Alice), a legitimate receiver (Bob), and a warden (Willie). Alice and Bob are equipped with $N_\rmt$ transmit PRAs and $N_\rmb$ receive PRAs, respectively. Willie employs a conventional single vertically polarized antenna and performs radiometric detection to determine whether Alice is transmitting. We first describe the phase-shifter-based PRA architecture and then establish the Alice-Bob communication model and Willie's binary detection model.

\subsection{PRA Model Based on Phase Shifters}
Alice and Bob employ phase-shifter-based PRAs. Each PRA is connected to a dedicated RF chain and comprises two orthogonally polarized elements, referred to as the vertical (V) and horizontal (H) elements. A phase shifter controls the relative phase between these elements modulo $2\pi$, thereby enabling the synthesis of linear, circular, and elliptical polarization states.
Let $\xi_{n_\rmt}\in\mathbb T$ and $\phi_{n_\rmb}\in\mathbb T$ denote the phase shifts of Alice's $n_\rmt$-th transmit PRA and Bob's $n_\rmb$-th receive PRA, respectively. The corresponding transmit and receive PSVs are ${\bm\xi}=[\xi_1,\xi_2,\ldots,\xi_{N_\rmt}]^T$ and ${\bm\phi}=[\phi_1,\phi_2,\ldots,\phi_{N_\rmb}]^T$, respectively. The polarforming vectors (PFVs) of the $n_\rmt$-th transmit PRA and the $n_\rmb$-th receive PRA are
\begin{align}
	{\bm f}(\xi_{n_\rmt}) &= \tfrac{1}{\sqrt{2}}
	\begin{bmatrix}
		1 ~ e^{j\xi_{n_\rmt}}
	\end{bmatrix}^T, \xi_{n_\rmt}\in\mathbb T,\ 1\le n_\rmt\le N_\rmt, \label{PFV_F_covert}\\
	{\bm g}(\phi_{n_\rmb}) &=
	\begin{bmatrix}
		1 ~e^{j\phi_{n_\rmb}}
	\end{bmatrix}^T, \phi_{n_\rmb}\in\mathbb T,\ 1\le n_\rmb\le N_\rmb, \label{PFV_G_covert}
\end{align}
where the factor $1/\sqrt{2}$ in Eq.~\eqref{PFV_F_covert} normalizes the transmit PFV.\footnote{No corresponding factor is required in Eq.~\eqref{PFV_G_covert}, because the receive PFV is not subject to the transmit power constraint.} Accordingly, the polarforming matrices of Alice and Bob are given by
\begin{align}
\!\!\!\!	{\bm F}({\bm\xi}) &\!=\! {\rm blkdiag}\left\{{\bm f}(\xi_1),{\bm f}(\xi_2),\ldots,{\bm f}(\xi_{N_\rmt})\right\}\in\mathbb{C}^{2N_\rmt\times N_\rmt},\\
\!\!\!\!	{\bm G}({\bm\phi}) &\!= \!{\rm blkdiag}\left\{{\bm g}(\phi_1),{\bm g}(\phi_2),\ldots,{\bm g}(\phi_{N_\rmb})\right\}\!\in\!\mathbb{C}^{2N_\rmb\times N_\rmb}.
\end{align}

\subsection{PRA-Aided MIMO Communication}
\subsubsection{Polarization Coupling Model}
In an ideal polarized link, the V and H components are perfectly isolated. In practice, imperfect antenna cross-polar isolation (XPI) and channel depolarization introduce residual coupling between them. Following \cite{zhou2024polarforming}, the coupling at the antennas is modeled by
\begin{align}
	\bm{\mathit{\Sigma}}_\mathrm{t} &= \frac{1}{\sqrt{1+\mu_{\rm t}}}
	\begin{bmatrix}
		1 & \sqrt{\mu_{\rm t}}\\
		\sqrt{\mu_{\rm t}} & 1
	\end{bmatrix},\\
	\bm{\mathit{\Sigma}}_\mathrm{b} &= \frac{1}{\sqrt{1+\mu_{\rm b}}}
	\begin{bmatrix}
		1 & \sqrt{\mu_{\rm b}}\\
		\sqrt{\mu_{\rm b}} & 1
	\end{bmatrix},
\end{align}
where $1/\mu_{\rm t}$ and $1/\mu_{\rm b}$ are the XPI values of Alice's and Bob's antennas, respectively, expressed on a linear scale. The inverse XPI parameters satisfy $0\leq\mu_{\rm t}\leq1$ and $0\leq\mu_{\rm b}\leq1$, which ensures that $\bm{\mathit{\Sigma}}_{\rm t}$ and $\bm{\mathit{\Sigma}}_{\rm b}$ are positive semidefinite and admit Hermitian matrix square roots.

Let $\tilde{\bm \Psi}_{\rm b}$ denote the polarization coupling matrix of the Alice-Bob channel, modeled as
\begin{align}
	\tilde{{\bm \Psi}}_{\rm b}= \frac{1}{\sqrt{1+\chi_{\rm b}}}
	\begin{bmatrix}
		1 & \sqrt{\chi_{\rm b}}\\
		\sqrt{\chi_{\rm b}} & 1
	\end{bmatrix},
\end{align}
where $\chi_{\rm b}$ is the inverse cross-polarization ratio (XPR) of the Alice-Bob link.

\subsubsection{Polarized MIMO Channel}
The polarized Alice-Bob channel is represented as
\begin{equation}
	\begin{aligned}
		\bm{P}_{\rm b}=
		\begin{bmatrix}
			\tilde{{\bm P}}_{11} & \cdots & \tilde{{\bm P}}_{1N_{\rmt}}\\
			\vdots & \ddots & \vdots\\
			\tilde{{\bm P}}_{N_{\rmb}1} & \cdots & \tilde{{\bm P}}_{N_{\rmb}N_{\rmt}}
		\end{bmatrix}
		\in\mathbb{C}^{2N_{\rmb}\times 2N_{\rmt}},
	\end{aligned}
	\label{Pb}
\end{equation}
where $\tilde{{\bm P}}_{n_\rmb n_\rmt}\in\mathbb{C}^{2\times 2}$ is the polarized subchannel between Alice's $n_\rmt$-th PRA and Bob's $n_\rmb$-th PRA. Under Rayleigh fading, this subchannel is modeled as
\begin{equation}
	\tilde{{\bm P}}_{n_\rmb n_\rmt}
	=
	\bm{\mathit{\Sigma}}_\mathrm{b}^{\frac{1}{2}}
	\left(
	\tilde{\bm\Psi}_{\rm b}\odot \tilde{\bm H}_{{\rm b},n_\rmb n_\rmt}
	\right)
	\bm{\mathit{\Sigma}}_\mathrm{t}^{\frac{1}{2}},
\end{equation}
where the entries of $\tilde{\bm H}_{{\rm b},n_\rmb n_\rmt}\in\mathbb{C}^{2\times 2}$ are independently and identically distributed as $\mathcal{CN}(0,\beta_{\rm b})$, and $\beta_{\rm b}$ accounts for the large-scale attenuation of the Alice-Bob link \cite{zhou2024polarforming}.

\subsubsection{Achievable Rate}
With polarforming at Alice and Bob, the effective channel from Alice's RF chains to Bob's RF chains is
\begin{equation}
	{\bm H}_{\rm b}({\bm\xi},{\bm\phi})\triangleq {\bm G}^H({\bm\phi}){\bm P}_{\rm b}{\bm F}({\bm\xi}).
\end{equation}

We consider quasi-static block fading over $L$ symbol intervals and assume that Alice employs Gaussian signaling. Specifically, the transmitted vectors are independently generated as $\bm x_l\sim\mathcal{CN}(\bm 0,\bm W)$, $l=1,\ldots,L$, where $\bm W\succeq\bm 0$ is the transmit covariance matrix. The average transmit power constraint is
\begin{equation}
	\mathbb E\{\|\bm x_l\|^2\}=\tr(\bm W)\le P.
\end{equation}
Since $\bm F^H(\bm\xi)\bm F(\bm\xi)=\bm I_{N_\rmt}$, this constraint is also the average radiated power constraint after transmit polarforming.
All channels remain constant within the $L$ symbol intervals.

Then, the received signal at Bob in the $l$-th symbol interval is
\begin{equation}
	\bm y_{{\rm b},l}={\bm H}_{\rm b}({\bm\xi},{\bm\phi})\bm x_l+\bm n_{{\rm b},l},\quad l=1,\ldots,L,
\end{equation}
where $\bm n_{{\rm b},l}\sim\mathcal{CN}(\bm 0,\sigma_{\rm b}^2\bm I_{N_\rmb})$ denotes the effective additive white Gaussian noise at Bob. Following \cite{zhou2024polarforming}, we assume that the noise introduced by the antenna elements and phase shifters is negligible relative to that generated by the active RF chain components. Accordingly, the dominant receiver noise is modeled after receive polarforming and has covariance $\sigma_{\rm b}^2\bm I_{N_\rmb}$. The achievable rate at Bob is
\begin{equation}\label{rate_bob}
	R_{\rm b}=\log_2\det\left(\bm I_{N_\rmb}+\frac{{\bm H}_{\rm b}({\bm\xi},{\bm\phi})\bm W{\bm H}_{\rm b}^H({\bm\xi},{\bm\phi})}{\sigma_{\rm b}^2}\right).
\end{equation}

\subsection{Covert Detection}

We assume that Alice has perfect instantaneous CSI for both the Alice-Bob and polarized Alice-Willie links. This idealized assumption yields an upper bound on the achievable covert communication performance.
For a conservative covertness assessment, Willie is assumed to know $\sigma_{\rm w}^2$ and the signal variance under $\mathcal H_1$, and therefore implements the optimal likelihood-ratio test for the stated model.

\subsubsection{Alice-Willie Channel}
Willie's fixed vertically polarized antenna cannot adapt its polarization state and therefore observes only the vertical projection of Alice's transmitted signal. Since Willie has only one vertically polarized receive port, no receive-side cross-polar coupling or XPI matrix arises at Willie. Nevertheless, imperfect transmit XPI at Alice and channel depolarization allow energy radiated by Alice's H elements to reach Willie's V receive port. Consequently, Willie's scalar observation contains both co-polar V-to-V propagation and cross-polar H-to-V leakage, where the latter is an effective propagation and transmit-side leakage term rather than crosstalk between receive polarization ports at Willie.
Let ${\bm h}_{\rm w}^H\in\mathbb{C}^{1\times 2N_\rmt}$ denote the equivalent polarized Alice-Willie channel observed at Willie's vertically polarized antenna, which can be written as
\begin{equation}
	\bm h_{\rm w}^H=
	\begin{bmatrix}
		u_1 & v_1 & \cdots & u_{N_\rmt} & v_{N_\rmt}
	\end{bmatrix},
\end{equation}
where $u_n$ and $v_n$ denote, respectively, the effective co-polar V-to-V coefficient and cross-polar H-to-V leakage coefficient associated with Alice's $n$-th transmit PRA after accounting for Alice's transmit-side XPI and channel depolarization. Since Willie has only one V receive port, no receive-side polarization coupling matrix is applied. After transmit polarforming, the effective Alice-Willie channel is
\begin{equation}
	{\bm q}_{\rm w}^H({\bm\xi})
	\triangleq
	{\bm h}_{\rm w}^H{\bm F}({\bm\xi})
	\in\mathbb{C}^{1\times N_\rmt}.
\end{equation}

\subsubsection{Binary Hypothesis}
Accordingly, Willie's received signal at the $l$-th symbol interval is
\begin{equation}
	y_{{\rm w},l}=
	\begin{cases}
		n_{{\rm w},l}, & \mathcal{H}_0,\\
		{\bm q}_{\rm w}^H({\bm\xi})\bm x_l+n_{{\rm w},l}, & \mathcal{H}_1,
	\end{cases}
	\quad l=1,\ldots,L,
\end{equation}
where $\mathcal{H}_0$ and $\mathcal{H}_1$ denote the hypotheses that Alice is silent and active, respectively. The noise terms $n_{{\rm w},l}\sim\mathcal{CN}(0,\sigma_{\rm w}^2)$ are independent across symbol intervals and independent of the transmitted signal vectors $\bm x_l$.
Under the Gaussian random signaling assumption, i.e., $\bm x_l\sim\mathcal{CN}(\bm 0,\bm W)$, the signal component observed at Willie satisfies
\begin{equation}
	{\bm q}_{\rm w}^H({\bm\xi})\bm x_l
	\sim
	\mathcal{CN}\left(
	0,
	{\bm q}_{\rm w}^H({\bm\xi})\bm W{\bm q}_{\rm w}({\bm\xi})
	\right).
\end{equation}
Define Willie's average received signal power under $\mathcal H_1$ as
\begin{equation}\label{willie_power}
	S_{\rm w}
	\triangleq
	{\bm q}_{\rm w}^H({\bm\xi})\bm W{\bm q}_{\rm w}({\bm\xi}).
\end{equation}
Because the transmitted signal and receiver noise are independent, Willie's observations in each symbol interval follow
\begin{align}
	y_{{\rm w},l}
	&\sim
	\mathcal{CN}(0,\sigma_{\rm w}^2),
	\quad \mathcal{H}_0,\\
	y_{{\rm w},l}
	&\sim
	\mathcal{CN}(0,\sigma_{\rm w}^2+S_{\rm w}),
	\quad \mathcal{H}_1.
\end{align}

\subsubsection{Radiometric Detection}
Because the two hypotheses differ only in variance, the sample energy is a sufficient statistic for the likelihood-ratio test. Willie's radiometric statistic is therefore
\begin{equation}\label{radiometer}
	T=\frac{1}{L}\sum_{l=1}^{L}|y_{{\rm w},l}|^2.
\end{equation}
Given a decision threshold $\eta$, Willie decides
\begin{equation}
	T \mathop{\gtrless}_{\mathcal{H}_0}^{\mathcal{H}_1} \eta.
\end{equation}

The conditional means of the test statistic are
\begin{align}
	\mathbb{E}[T|\mathcal{H}_0] &= \sigma_{\rm w}^2,\\
	\mathbb{E}[T|\mathcal{H}_1] &= S_{\rm w}+\sigma_{\rm w}^2.
\end{align}
For a threshold $\eta$, the false-alarm and missed-detection probabilities are, respectively,
\begin{align}
	P_{\rm FA}(\eta)&=\Pr(T>\eta|\mathcal{H}_0),\\
	P_{\rm MD}(\eta)&=\Pr(T<\eta|\mathcal{H}_1).
\end{align}
\subsubsection{Covertness Metric}
Following the covert communication convention, the minimum total DEP is defined as
\begin{equation}\label{DEP}
	\zeta_L^\star\triangleq \min_{\eta\ge0}\left\{P_{\rm FA}(\eta)+P_{\rm MD}(\eta)\right\}.
\end{equation}
A larger $\zeta_L^\star$ indicates stronger covertness. We impose the covertness requirement
\begin{equation}\label{covert_constraint}
	\zeta_L^\star\ge 1-\epsilon,
\end{equation}
where $\epsilon\in(0,1)$ is a prescribed covertness parameter. This metric weights false alarms and missed detections equally; equivalently, $\zeta_L^\star/2$ is the minimum Bayes error under equal priors.

We use Eq.~\eqref{DEP} as the covertness metric throughout this paper. The next section derives the exact finite-observation-length DEP and converts Eq.~\eqref{covert_constraint} into an equivalent deterministic constraint on Eq.~\eqref{willie_power}.

\section{Covertness Analysis and Problem Formulation}\label{sec:covertness_analysis}
In this section, we first characterize Willie's optimal detection performance for a finite observation length and convert the resulting DEP requirement into an equivalent received signal power constraint. We then examine the asymptotic regime as the observation length tends to infinity and identify the corresponding condition for covertness. Finally, using the finite-observation-length covertness characterization, we formulate the joint optimization of the transmit covariance matrix and the transmit and receive PSVs.

\subsection{Finite-Observation-Length Covertness Analysis}

We first characterize the exact DEP for a finite observation length $L$, without invoking an asymptotic approximation. Based on Eq.~\eqref{willie_power}, define Willie's received SNR as
\begin{equation}
	\rho \triangleq \frac{S_{\rm w}}{\sigma_{\rm w}^2}.
\end{equation}
Under $\mathcal H_0$ and $\mathcal H_1$, respectively, Willie's observation follows
\begin{align}
	y_{{\rm w},l} &\sim \mathcal{CN}(0,\sigma_{\rm w}^2), \quad \mathcal H_0,\\
	y_{{\rm w},l} &\sim \mathcal{CN}(0,\sigma_{\rm w}^2+S_{\rm w})
	=\mathcal{CN}(0,\sigma_{\rm w}^2(1+\rho)), \quad \mathcal H_1.
\end{align}
It follows from Eq.~\eqref{radiometer} that the normalized test statistic satisfies
\begin{align}
	\frac{LT}{\sigma_{\rm w}^2}
	&\sim \mathrm{Gamma}(L,1), \quad \mathcal H_0,\\
	\frac{LT}{\sigma_{\rm w}^2(1+\rho)}
	&\sim \mathrm{Gamma}(L,1), \quad \mathcal H_1,
\end{align}
where $\mathrm{Gamma}(L,1)$ denotes the Gamma distribution with shape parameter $L$ and scale parameter $1$.

For a given threshold $\eta$, Willie decides $\mathcal H_1$ when $T>\eta$ and $\mathcal H_0$ otherwise. The corresponding error probabilities are
\begin{align}
	P_{\rm FA}(\eta)
	&=
	\Pr(T>\eta|\mathcal H_0) \notag\\
	&=
	\frac{\Gamma\left(L,\frac{L\eta}{\sigma_{\rm w}^2}\right)}{\Gamma(L)},
	\label{PFA_exact}\\
	P_{\rm MD}(\eta;\rho)
	&=
	\Pr(T<\eta|\mathcal H_1) \notag\\
	&=
	\frac{\gamma\left(L,\frac{L\eta}{\sigma_{\rm w}^2(1+\rho)}\right)}{\Gamma(L)}.
	\label{PMD_exact}
\end{align}

To obtain the threshold minimizing $P_{\rm FA}+P_{\rm MD}$, define the observation vector $\bm y_{\rm w}\triangleq[y_{{\rm w},1},\ldots,y_{{\rm w},L}]^T$. Its likelihood functions under $\mathcal H_0$ and $\mathcal H_1$ are
\begin{align}
	p_0(\bm y_{\rm w})
	&=
	\frac{1}{(\pi\sigma_{\rm w}^2)^L}
	\exp\left(-\frac{\|\bm y_{\rm w}\|^2}{\sigma_{\rm w}^2}\right),\\
	p_1(\bm y_{\rm w})
	&=
	\frac{1}{(\pi\sigma_{\rm w}^2(1+\rho))^L}
	\exp\left(-\frac{\|\bm y_{\rm w}\|^2}{\sigma_{\rm w}^2(1+\rho)}\right).
\end{align}
Under equal priors, the likelihood-ratio test decides $\mathcal H_1$ when $p_1(\bm y_{\rm w})>p_0(\bm y_{\rm w})$, or equivalently when
\begin{equation}
	\|\bm y_{\rm w}\|^2
	>
	L\sigma_{\rm w}^2
	\frac{1+\rho}{\rho}
	\ln(1+\rho).
\end{equation}
Since $T=\|\bm y_{\rm w}\|^2/L$, the optimal radiometric threshold is
\begin{equation}
	\eta^\star(\rho)
	=
	\sigma_{\rm w}^2
	\frac{1+\rho}{\rho}
	\ln(1+\rho),
	\quad \rho>0.
	\label{eta_star_exact}
\end{equation}
For $\rho=0$, the two hypotheses induce identical distributions and are therefore statistically indistinguishable.

Substituting Eq.~\eqref{eta_star_exact} into Eqs.~\eqref{PFA_exact} and \eqref{PMD_exact} gives the exact minimum DEP as a function of $\rho$:
\begin{align}
	\zeta_L^\star(\rho)
	&=
	P_{\rm FA}(\eta^\star(\rho))
	+
	P_{\rm MD}(\eta^\star(\rho);\rho) \notag\\
	&=
	\frac{
		\Gamma\left(
		L,
		L\frac{1+\rho}{\rho}\ln(1+\rho)
		\right)
	}{
		\Gamma(L)
	}
	+
	\frac{
		\gamma\left(
		L,
		L\frac{\ln(1+\rho)}{\rho}
		\right)
	}{
		\Gamma(L)
	}.
	\label{DEP_exact}
\end{align}
At $\rho=0$, this function is defined by continuity as
\begin{equation}
	\zeta_L^\star(0)=1.
\end{equation}

Eq.~\eqref{DEP_exact} shows that $\bm W$ and $\bm\xi$ affect the DEP only through $S_{\rm w}$, or equivalently through $\rho=S_{\rm w}/\sigma_{\rm w}^2$. The following proposition establishes the monotonicity required to express the DEP requirement as an equivalent SNR constraint.

\begin{proposition}\label{prop_dep_monotonicity}
	For any positive integer $L$, $\zeta_L^\star(\rho)$ is continuous on $[0,\infty)$ and strictly decreasing for $\rho>0$. Moreover, $\zeta_L^\star(0)=1$ and $\lim_{\rho\to\infty}\zeta_L^\star(\rho)=0$.
\end{proposition}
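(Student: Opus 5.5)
The plan is to rewrite Eq.~\eqref{DEP_exact} using the Gamma$(L,1)$ cumulative distribution function $F_L(x)\triangleq\gamma(L,x)/\Gamma(L)$. Since $\Gamma(L,x)/\Gamma(L)=1-F_L(x)$, we get
\begin{equation*}
\zeta_L^\star(\rho)=1-\bigl[F_L(b(\rho))-F_L(a(\rho))\bigr],\quad a(\rho)=\frac{L\ln(1+\rho)}{\rho},\quad b(\rho)=(1+\rho)\,a(\rho).
\end{equation*}
$F_L$ is continuous and strictly increasing on $[0,\infty)$, because the density $x^{L-1}e^{-x}/\Gamma(L)$ is positive for $x>0$. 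The functions $a$ and $b$ are continuous on $(0,\infty)$. Hence $\zeta_L^\star$ is continuous for $\rho>0$.

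\emph{Continuity at $0$.} As $\rho\to0^+$, $\ln(1+\rho)/\rho\to1$, so both $a(\rho)$ and $b(\rho)$ tend to $L$. Then $F_L(b)-F_L(a)\to0$, and $\zeta_L^\star(\rho)\to1=\zeta_L^\star(0)$.

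\emph{Limit at infinity.} As $\rho\to\infty$, $a(\rho)\to0$, so $F_L(a)\to0$. Also $b(\rho)\ge L\ln(1+\rho)\to\infty$, so $F_L(b)\to1$. Therefore $\zeta_L^\star(\rho)\to0$.

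\emph{Strict monotonicity.} I would avoid differentiating the closed form and use a comparison argument instead. Write $\Phi(\eta;\rho)\triangleq P_{\rm FA}(\eta)+P_{\rm MD}(\eta;\rho)$. First I would record that $\zeta_L^\star(\rho)=\min_{\eta\ge0}\Phi(\eta;\rho)=\Phi(\eta^\star(\rho);\rho)$. This holds because the equal-prior likelihood-ratio test minimizes $P_{\rm FA}+P_{\rm MD}$ over \emph{all} decision rules, and for the variance-only model derived above it is exactly the radiometer with threshold \eqref{eta_star_exact}. Any other threshold $\eta$ is one particular rule, so it cannot do better.

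Now fix $0<\rho_1<\rho_2$ and set $\eta_1=\eta^\star(\rho_1)>0$. For this fixed $\eta_1$, $P_{\rm FA}$ does not depend on $\rho$. Meanwhile $P_{\rm MD}(\eta_1;\rho)=F_L\bigl(L\eta_1/(\sigma_{\rm w}^2(1+\rho))\bigr)$ is strictly decreasing in $\rho$, since its argument is positive and strictly decreasing and $F_L$ is strictly increasing. Hence
\begin{equation*}
\zeta_L^\star(\rho_2)\le\Phi(\eta_1;\rho_2)<\Phi(\eta_1;\rho_1)=\zeta_L^\star(\rho_1),
\end{equation*}
which gives strict decrease on $(0,\infty)$. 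Combined with continuity at $0$ and the value $\zeta_L^\star(0)=1$, monotonicity also extends to $[0,\infty)$.

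The main obstacle is the identification of the minimum over $\eta$ with the likelihood-ratio threshold, i.e.\ the Bayes optimality of the LRT among all tests. It is what makes the first inequality in the comparison valid. I would state it explicitly via the standard argument that $P_{\rm FA}+P_{\rm MD}=1-\int_{\mathcal D_1}(p_1-p_0)$, where $\mathcal D_1$ is the region in which Willie decides $\mathcal H_1$. This quantity is minimized by taking $\mathcal D_1=\{p_1>p_0\}$. Everything else is elementary.
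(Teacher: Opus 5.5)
Your proof is correct. The continuity and limit steps match the appendix; note that your $a$ and $b$ are the appendix's $b$ and $a$, respectively. Your strict-monotonicity argument, however, takes a genuinely different route.

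The paper differentiates the closed form Eq.~\eqref{DEP_exact} directly. It uses the density-intersection identity $g_L(b(\rho))=(1+\rho)g_L(a(\rho))$ (in its labels) to collapse the derivative to $-\frac{a(\rho)}{1+\rho}g_L(a(\rho))<0$. That identity is exactly the first-order condition $\partial\Phi/\partial\eta=0$ at $\eta^\star(\rho)$. So the paper is in effect applying the envelope theorem, and its derivative is $\partial\Phi/\partial\rho$ evaluated at the optimal threshold. Your comparison $\zeta_L^\star(\rho_2)\le\Phi(\eta_1;\rho_2)<\Phi(\eta_1;\rho_1)=\zeta_L^\star(\rho_1)$ is the derivative-free version of the same idea.

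The paper's route gives an explicit slope, which is useful, e.g., for a Newton-type computation of $\rho_{\epsilon,L}$. It also works on the closed-form function alone, without invoking Bayes optimality of the LRT.

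Your route needs that optimality step. You supply it correctly via $P_{\rm FA}+P_{\rm MD}=1-\int_{\mathcal D_1}(p_1-p_0)$. A more local alternative is to note that $\partial\Phi/\partial\eta$ is a positive multiple of $f_1-f_0$, which changes sign exactly once, at $\eta^\star(\rho)$. In exchange, your argument uses only two facts: the minimum over $\eta$ is attained, and for each fixed $\eta>0$, $P_{\rm FA}(\eta)$ is independent of $\rho$ while $P_{\rm MD}(\eta;\rho)$ strictly decreases in $\rho$. It would therefore survive in settings where no closed form for the optimal DEP is available. It also covers $\rho_1=0$ directly, since at $\rho=0$ every threshold gives $\Phi=1$.
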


\begin{proof}
	See Appendix~\ref{app:dep_monotonicity}.
\end{proof}

Proposition~\ref{prop_dep_monotonicity} guarantees the existence and uniqueness of the threshold below. Therefore, for a prescribed covertness parameter $\epsilon\in(0,1)$, the finite-observation-length covertness constraint
\begin{equation}
	\zeta_L^\star(\rho)\ge 1-\epsilon
\end{equation}
is equivalent to
\begin{equation}
	\rho\le \rho_{\epsilon,L},
\end{equation}
where $\rho_{\epsilon,L}$ is the unique positive solution of
\begin{equation}
	\zeta_L^\star(\rho_{\epsilon,L})=1-\epsilon.
	\label{rho_epsilon_def}
\end{equation}
Consequently, the exact finite-observation-length covertness constraint is equivalent to the deterministic received signal power constraint
\begin{equation}
	S_{\rm w}
	=
	{\bm q}_{\rm w}^H(\bm\xi)\bm W{\bm q}_{\rm w}(\bm\xi)
	\le
	\tau_{\epsilon,L},
	\label{exact_power_constraint}
\end{equation}
where
\begin{equation}
	\tau_{\epsilon,L}
	\triangleq
	\sigma_{\rm w}^2\rho_{\epsilon,L}.
\end{equation}
Although $\rho_{\epsilon,L}$ does not generally admit a closed-form expression, it can be computed efficiently by a one-dimensional bisection search using Eq.~\eqref{DEP_exact}. Hence, $\tau_{\epsilon,L}$ is a known scalar for fixed $(L,\epsilon,\sigma_{\rm w}^2)$, and the numerical inversion does not alter the structure of the subsequent joint optimization.

\subsection{Asymptotic Covertness Analysis for Infinite Observation Length}\label{sec:asymptotic_covertness}

We next characterize the limiting behavior for a fixed design as $L\to\infty$ under the assumed Gaussian signaling and known noise power. Specifically, $\bm W$, $\bm\xi$, and $\sigma_{\rm w}^2$, and hence $S_{\rm w}$ and $\rho=S_{\rm w}/\sigma_{\rm w}^2$, are held fixed as $L$ increases.

\begin{proposition}\label{prop_inf_covertness}
	For a fixed $S_{\rm w}$ under the considered model with known noise power, the minimum DEP at Willie satisfies
	\begin{equation}\label{DEP_inf}
		\lim_{L\to\infty}\zeta_L^\star(\rho)=
		\begin{cases}
			1, & S_{\rm w}=0,\\
			0, & S_{\rm w}>0.
		\end{cases}
	\end{equation}
	Therefore, for any prescribed covertness level $1-\epsilon$ with $\epsilon\in(0,1)$, the asymptotic covertness constraint
	\begin{equation}
		\lim_{L\to\infty}\zeta_L^\star(\rho) \ge 1-\epsilon
	\end{equation}
	holds if and only if
	\begin{equation}\label{asymp_constraint}
		S_{\rm w}=0.
	\end{equation}
\end{proposition}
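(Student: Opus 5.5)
The plan splits into the cases $S_{\rm w}=0$ and $S_{\rm w}>0$, and then reads off the "if and only if" from the resulting limit.

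\emph{Case $S_{\rm w}=0$.} Here $\rho=0$. By the continuity convention stated before Proposition~\ref{prop_dep_monotonicity}, $\zeta_L^\star(0)=1$ for every $L$, so the limit is $1$. This is also immediate from the fact that $p_0=p_1$.

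\emph{Case $S_{\rm w}>0$.} Fix $\rho>0$ and write $\zeta_L^\star(\rho)$ as the sum of the two terms in Eq.~\eqref{DEP_exact}. Let $Z_L$ be a sum of $L$ i.i.d.\ unit-mean exponential variables. Then
\begin{equation*}
\Gamma(L,La)/\Gamma(L)=\Pr(Z_L/L>a),\qquad \gamma(L,Lb)/\Gamma(L)=\Pr(Z_L/L<b),
\end{equation*}
with
\begin{equation*}
a=\frac{1+\rho}{\rho}\ln(1+\rho),\qquad b=\frac{\ln(1+\rho)}{\rho}.
\end{equation*}
The key step is to verify the strict inequalities $b<1<a$ for every $\rho>0$. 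Using $x=\rho$, the inequality $b<1$ is $\ln(1+x)<x$. The inequality $a>1$ is $\ln(1+x)>x/(1+x)$, which holds because $g(x)=\ln(1+x)-x/(1+x)$ satisfies $g(0)=0$ and $g'(x)=x/(1+x)^2>0$.

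Given $b<1<a$, the weak law of large numbers gives $Z_L/L\to1$ in probability. Hence both probabilities tend to $0$, so $\zeta_L^\star(\rho)\to0$. If a rate is wanted, Chebyshev's inequality with $\mathrm{Var}(Z_L/L)=1/L$ gives an explicit bound of order $1/L$. Chernoff bounds would give exponential decay, but that is not needed for the limit.

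\emph{Equivalence.} For $\epsilon\in(0,1)$ we have $1-\epsilon\in(0,1)$. The limit equals $1\ge1-\epsilon$ exactly when $S_{\rm w}=0$, and equals $0<1-\epsilon$ otherwise. This yields Eq.~\eqref{asymp_constraint}.

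The only real content is the strict separation $b<1<a$, i.e., that the optimal threshold $\eta^\star(\rho)$ lies strictly between the two conditional means $\sigma_{\rm w}^2$ and $\sigma_{\rm w}^2(1+\rho)$. A subtle point is that $\eta^\star$ is the Bayes-optimal threshold for each $L$ and does not depend on $L$, since $\eta^\star(\rho)$ in Eq.~\eqref{eta_star_exact} is free of $L$. This makes the LLN argument direct. One could alternatively bound $\zeta_L^\star$ above by the DEP of any fixed threshold strictly between the means, which avoids even computing $\eta^\star$.
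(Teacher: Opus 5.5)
Your proof is correct, but for $S_{\rm w}>0$ it takes a slightly different route from the paper. The paper never uses the closed-form DEP. It applies the strong law of large numbers to $T$, fixes an arbitrary threshold $\eta\in(\sigma_{\rm w}^2,\sigma_{\rm w}^2+S_{\rm w})$, and concludes from $0\le\zeta_L^\star\le P_{\rm FA}(\eta)+P_{\rm MD}(\eta)\to0$; this is exactly the alternative in your closing sentence. Your main argument instead evaluates the minimum DEP at the Bayes threshold $\eta^\star(\rho)$. That obliges you to prove the extra fact $b<1<a$, i.e., that $\eta^\star(\rho)$ itself separates the two conditional means, and your verification via $\ln(1+x)<x$ and the monotonicity of $\ln(1+x)-x/(1+x)$ is correct.

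The paper's version is shorter. It needs only that $\zeta_L^\star$ is a minimum over thresholds and that $T$ concentrates, so it does not depend on the likelihood-ratio derivation of $\eta^\star$ or on the Gamma form of the exact DEP. Your version gives quantitative control of $\zeta_L^\star$ itself, e.g., $\zeta_L^\star(\rho)\le L^{-1}\bigl[(a-1)^{-2}+(1-b)^{-2}\bigr]$ from Chebyshev. Moreover, $a-b=\ln a-\ln b=\ln(1+\rho)$, so the Chernoff exponents $a-1-\ln a$ and $b-1-\ln b$ of the two error terms coincide at $\eta^\star$. Your route would therefore give the sharp exponential decay rate of the minimum DEP (tight by Cram\'er's theorem), whereas a generic separating threshold yields a strictly weaker exponent. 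Your treatment of $S_{\rm w}=0$ and of the final equivalence matches the paper.
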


\begin{proof}
By the strong law of large numbers, the statistic in Eq.~\eqref{radiometer} converges almost surely to
\begin{equation}\label{T_limit}
	T \xrightarrow{\mathrm{a.s.}}
	\begin{cases}
		\sigma_{\rm w}^2, & \mathcal H_0,\\
		\sigma_{\rm w}^2+S_{\rm w}, & \mathcal H_1.
	\end{cases}
\end{equation}
If $S_{\rm w}>0$, Willie can select any threshold satisfying $\sigma_{\rm w}^2<\eta<\sigma_{\rm w}^2+S_{\rm w}$. The two distinct limits in Eq.~\eqref{T_limit} then imply $P_{\rm FA}(\eta)\to0$ and $P_{\rm MD}(\eta)\to0$. Since the minimum DEP is nonnegative and cannot exceed the DEP attained by this threshold, $\zeta_L^\star(\rho)\to0$. If $S_{\rm w}=0$, the two hypotheses induce identical distributions for every $L$, and hence $\zeta_L^\star(0)=1$. Since $\epsilon\in(0,1)$, the asymptotic covertness constraint is therefore satisfied if and only if $S_{\rm w}=0$.
\end{proof}

Proposition~\ref{prop_inf_covertness} shows that, under the considered model, asymptotic covertness requires the zero leakage condition
\begin{equation}
{\bm q}_{\rm w}^H(\bm\xi)\bm W{\bm q}_{\rm w}(\bm\xi)=0.
\end{equation}
Polarforming therefore provides an additional DoF for reducing $S_{\rm w}$ beyond spatial covariance design alone. Whether zero leakage and a nonzero rate at Bob can be achieved simultaneously depends on the particular Alice-Bob and Alice-Willie channels.

\subsection{Problem Formulation}
For a prescribed finite observation length $L$, the joint design is formulated as ${\cal P}_0$ using the achievable rate in Eq.~\eqref{rate_bob} and the covertness requirement in Eq.~\eqref{covert_constraint}:
\begin{subequations}
\begin{align}\label{P1_covert}
\!\!\!\!\!\!	\mbox{${\cal P}_0$} \quad
	\max_{\bm\xi,\bm\phi,\bm W \succeq \bm 0} \quad
	& \log_2\det\left(
	\bm I_{N_\rmb}
\!+\!
	\tfrac{1}{\sigma_{\rm b}^2}{\bm H}_{\rm b}({\bm\xi},{\bm\phi})\bm W{\bm H}_{\rm b}^H({\bm\xi},{\bm\phi})
	\right)
	\\*
	\mathrm{s.t.}\qquad
	& \zeta_L^\star\!\left(\rho(\bm W,\bm\xi)\right) \ge 1-\epsilon, \label{P1_covertness}\\
	& \xi_{n_\rmt}\in\mathbb T,\ {n_\rmt}=1,\ldots,N_\rmt, \label{P1_xi}\\
	& \phi_{n_\rmb}\in\mathbb T,\ {n_\rmb}=1,\ldots,N_\rmb, \label{P1_phi}\\
	& \tr(\bm W)\le P. \label{P1_power}
\end{align}
\end{subequations}
Under the asymptotic model with a fixed design in Proposition~\ref{prop_inf_covertness}, the constraint in Eq.~\eqref{P1_covertness} reduces to $S_{\rm w}=0$. This limiting case is used only to characterize the asymptotic covertness requirement and is not solved by the proposed AO algorithm.
The exact DEP expression in Eq.~\eqref{DEP_exact} is not convenient for direct optimization. We therefore use the equivalent received signal power constraint in Eq.~\eqref{exact_power_constraint} and develop an AO algorithm for the resulting problem.

\section{Proposed Joint Optimization Algorithm}\label{sec:optimization}

This section develops an AO algorithm for the finite-observation-length design. We first reformulate the covertness constraint and then derive updates for the transmit covariance matrix, the receive PSV, and the transmit PSV. Finally, we summarize the overall algorithm and analyze its convergence and computational complexity.

For a prescribed finite observation length $L$, ${\cal P}_0$ is equivalently reformulated using Eq.~\eqref{exact_power_constraint} as
\begin{subequations}\label{P2_covert}
\begin{align}
\!\!\!
	\mbox{${\cal P}_1$} \quad
	\max_{\bm\xi,\bm\phi,\bm W \succeq \bm 0} \quad
	& \log_2\det\left(
	\bm I_{N_\rmb}
	+
	\frac{{\bm H}_{\rm b}({\bm\xi},{\bm\phi})\bm W{\bm H}_{\rm b}^H({\bm\xi},{\bm\phi})}{\sigma_{\rm b}^2}
	\right)
	\\*
	\mathrm{s.t.}\quad
	& {\bm q}_{\rm w}^H(\bm\xi)\bm W{\bm q}_{\rm w}(\bm\xi)\le \tau_{\epsilon,L},\label{P2_willie}\\
	& \text{Eqs.~}\eqref{P1_xi},~\eqref{P1_phi},~\eqref{P1_power}. \label{P2_inherited}
\end{align}
\end{subequations}
Although the DEP no longer appears explicitly, ${\cal P}_1$ remains non-convex because of the coupled covariance and phase variables. We address this coupling by cyclically optimizing $\bm W$, $\bm\phi$, and $\bm\xi$.

\subsection{Transmit Covariance Optimization}

For fixed $\bm\xi$ and $\bm\phi$, define
\begin{equation}
	\bar{\bm H}_{\rm b} \triangleq {\bm H}_{\rm b}({\bm\xi},{\bm\phi}),\qquad
	\bar{\bm q}_{\rm w} \triangleq {\bm q}_{\rm w}(\bm\xi).
\end{equation}
The covariance subproblem is formulated as
\begin{subequations}\label{PW}
\begin{align}
	\mbox{${\cal P}_2$} \quad
	\max_{\bm W \succeq \bm 0}\quad
	& \log_2\det\left(
	\bm I_{N_\rmb}
	+
	\frac{\bar{\bm H}_{\rm b}\bm W\bar{\bm H}_{\rm b}^H}{\sigma_{\rm b}^2}
	\right) \\
	\mathrm{s.t.}\quad
	& \text{Eqs.~}\eqref{P2_willie},~\eqref{P1_power}. \label{PW_constraints}
\end{align}
\end{subequations}
The objective is concave in $\bm W$, and both inequality constraints are affine in $\bm W$; hence, ${\cal P}_2$ is a convex optimization problem.
For $P>0$ and $\tau_{\epsilon,L}>0$, Slater's condition holds, and the Karush--Kuhn--Tucker (KKT) conditions are necessary and sufficient for optimality.

To characterize the optimal covariance, introduce dual variables $\lambda\ge0$ and $\mu\ge0$ for Eqs.~\eqref{P1_power} and \eqref{P2_willie}, respectively, and $\bm Z\succeq\bm 0$ for $\bm W\succeq\bm 0$. The Lagrangian is
\begin{align}
	\mathcal L(\bm W,\lambda,\mu,\bm Z)
	=
	&\log_2\det\left( \bm I_{N_\rmb}
	+
	\frac{\bar{\bm H}_{\rm b}\bm W\bar{\bm H}_{\rm b}^H}{\sigma_{\rm b}^2}
	\right) \notag\\
	&-\lambda(\tr(\bm W)-P)-\mu(\bar{\bm q}_{\rm w}^H\bm W\bar{\bm q}_{\rm w}-\tau_{\epsilon,L})\notag\\
	&
	+\tr(\bm Z\bm W).
\end{align}

Define
\begin{equation}
\bm K \triangleq \lambda \bm I+\mu \bar{\bm q}_{\rm w}\bar{\bm q}_{\rm w}^H.
\end{equation}
The KKT stationarity condition with respect to $\bm W$ is
\begin{equation}\label{KKT_stationarity}
	\frac{1}{\ln 2}\bar{\bm H}_{\rm b}^H
	\left(
	\sigma_{\rm b}^2\bm I_{N_\rmb}+\bar{\bm H}_{\rm b}\bm W\bar{\bm H}_{\rm b}^H
	\right)^{-1}
	\bar{\bm H}_{\rm b}
	=
	\bm K-\bm Z.
\end{equation}
We first consider the non-degenerate case $\lambda>0$, for which $\bm K\succ\bm 0$ and $\bm K^{-1/2}$ is well defined. A singular $\bm K$ can instead be handled by solving the original convex subproblem on the range of $\bm K$. For fixed $(\lambda,\mu)$, maximizing the Lagrangian over $\bm W$ is equivalent to
\begin{equation}
	\max_{\bm W\succeq \bm 0}\quad
	\log_2\det\left(
	\bm I_{N_\rmb}
	+
	\frac{\bar{\bm H}_{\rm b}\bm W\bar{\bm H}_{\rm b}^H}{\sigma_{\rm b}^2}
	\right)
	-\tr(\bm K\bm W).
\end{equation}

Let
\begin{equation}
	\tilde{\bm W}\triangleq \bm K^{1/2}\bm W\bm K^{1/2},
\end{equation}
and define the effective channel matrix
\begin{equation}
	\tilde{\bm H}\triangleq \bar{\bm H}_{\rm b}\bm K^{-1/2}.
\end{equation}
The transformed problem is
\begin{equation}
	\max_{\tilde{\bm W}\succeq \bm 0}\quad
	\log_2\det\left(
	\bm I_{N_\rmb}
	+
	\frac{1}{\sigma_{\rm b}^2}\tilde{\bm H}\tilde{\bm W}\tilde{\bm H}^H
	\right)
	-\tr(\tilde{\bm W}).
\end{equation}
Consider the eigenvalue decomposition
\begin{equation}
	\frac{1}{\sigma_{\rm b}^2}\bm K^{-1/2}\bar{\bm H}_{\rm b}^H\bar{\bm H}_{\rm b}\bm K^{-1/2}
	=
	\bm V\operatorname{diag}(\delta_1,\ldots,\delta_r)\bm V^H,
\end{equation}
where $r=\operatorname{rank}(\bar{\bm H}_{\rm b})$, $\delta_i$ is the $i$-th nonzero eigenvalue, and $\bm V\in\mathbb C^{N_\rmt\times r}$ contains the associated orthonormal eigenvectors. The optimal $\tilde{\bm W}$ shares these eigenvectors and is written as
\begin{equation}
	\tilde{\bm W}^\star
	=
	\bm V\operatorname{diag}(p_1,\ldots,p_r)\bm V^H,
\end{equation}
where the power coefficients solve
\begin{equation}
	\max_{p_i\ge 0}\ \sum_{i=1}^r \log_2(1+\delta_i p_i)-\sum_{i=1}^r p_i.
\end{equation}
The KKT conditions yield
\begin{equation}
p_i^\star=\left[\frac{1}{\ln 2}-\frac{1}{\delta_i}\right]^+,\qquad i=1,\ldots,r,
\end{equation}
Transforming back to the original coordinates gives the generalized water-filling structure
\begin{equation}\label{W_closed}
	\bm W^\star
	=
	\bm K^{-\frac{1}{2}}\bm V
	\operatorname{diag}(p_1^\star,\ldots,p_r^\star)
	\bm V^H\bm K^{-\frac{1}{2}}.
\end{equation}

At the dual optimum, primal feasibility and complementary slackness require
\begin{align}
	\lambda\big(\tr(\bm W^\star)-P\big)&=0,\\
	\mu\big(\bar{\bm q}_{\rm w}^H\bm W^\star \bar{\bm q}_{\rm w}-\tau_{\epsilon,L}\big)&=0.
\end{align}
The optimal dual variables can be obtained by minimizing the two-dimensional convex dual function, with Eq.~\eqref{W_closed} used to evaluate each candidate pair.

Therefore, for fixed $(\bm\xi,\bm\phi)$ and $\bm K\succ\bm0$, the optimal solution of ${\cal P}_2$ has the generalized water-filling structure in Eq.~\eqref{W_closed}, where $(\lambda,\mu)$ satisfy the KKT conditions.

\subsection{Receive Polarforming Optimization}
For fixed $\bm W$ and $\bm\xi$, the receive PSV $\bm\phi$ affects Bob's achievable rate but not Willie's received signal power. We therefore update one receive phase at a time while holding the remaining $N_\rmb-1$ phases fixed.

Specifically, let
\begin{equation}
	\bar{\bm P}(\bm\xi)\triangleq \bm P_{\rm b}\bm F(\bm\xi)\in\mathbb C^{2N_\rmb\times N_\rmt},
\end{equation}
and denote by
\begin{equation}
	\bar{\bm P}_n(\bm\xi)\triangleq
	\begin{bmatrix}
		\bm a_n^H\\
		\bm b_n^H
	\end{bmatrix}\in\mathbb C^{2\times N_\rmt}
\end{equation}
the $n$-th $2\times N_\rmt$ block of $\bar{\bm P}(\bm\xi)$, where $\bm a_n,\bm b_n\in\mathbb C^{N_\rmt\times 1}$. The $n$-th row of the effective Alice-Bob channel is
\begin{equation}
	\bm h_{{\rm b},n}^H(\phi_n)
	=
	\bm a_n^H+e^{-j\phi_n}\bm b_n^H.
\end{equation}

Let $\bm H_{{\rm b},-n}$ be $\bm H_{\rm b}(\bm\xi,\bm\phi)$ with its $n$-th row removed, and define
\begin{equation}
	\bm B_n \triangleq
	\bm I_{N_\rmt}
	+
	\frac{1}{\sigma_{\rm b}^2}\bm W^{\frac12}\bm H_{{\rm b},-n}^H\bm H_{{\rm b},-n}\bm W^{\frac12},
\end{equation}
which is independent of $\phi_n$. Sylvester's determinant theorem followed by the matrix determinant lemma gives
\begin{align}
	R_{\rm b}
	\!=\!
	\log_2\det(\bm B_n)
	+
	\log_2\left(
	1+
	\bm h_{{\rm b},n}^H(\phi_n)\bm T_n\bm h_{{\rm b},n}(\phi_n)
	\right),
\end{align}
where
\begin{equation}
	\bm T_n
	\triangleq
	\frac{1}{\sigma_{\rm b}^2}\bm W^{\frac12}\bm B_n^{-1}\bm W^{\frac12}.
\end{equation}
Therefore, optimizing $\phi_n$ is equivalent to
\begin{subequations}\label{Pphi_n}
\begin{align}
	\mbox{${\cal P}_{3,n}$}\quad
	\max_{\phi_n\in\mathbb T}\quad
	&\bm h_{{\rm b},n}^H(\phi_n)\bm T_n\bm h_{{\rm b},n}(\phi_n).
\end{align}
\end{subequations}
Substituting $\bm h_{{\rm b},n}(\phi_n)=\bm a_n+e^{j\phi_n}\bm b_n$ yields
\begin{align}
	\bm h_{{\rm b},n}^H(\phi_n)\bm T_n\bm h_{{\rm b},n}(\phi_n)
	=
	c_n+2\Re\left\{e^{j\phi_n}\beta_n\right\},
\end{align}
where
\begin{align}
	c_n &\triangleq \bm a_n^H\bm T_n\bm a_n+\bm b_n^H\bm T_n\bm b_n,\\
	\beta_n &\triangleq \bm a_n^H\bm T_n\bm b_n.
\end{align}
Let $\langle x\rangle_{2\pi}$ denote the unique representative of $x$ modulo $2\pi$ in $[0,2\pi)$. The optimal receive phase for the selected coordinate is
\begin{equation}\label{phi_closed_form}
	\phi_n^\star =
	\begin{cases}
		\langle-\angle(\beta_n)\rangle_{2\pi}, & \beta_n\neq 0,\\
		\text{arbitrary in } \mathbb T, & \beta_n=0.
	\end{cases}
\end{equation}

Successively applying Eq.~\eqref{phi_closed_form} for $n=1,\ldots,N_\rmb$ constitutes one cyclic coordinate-ascent sweep. Each coordinate update maximizes Bob's rate with respect to the selected phase and therefore cannot decrease the objective value.

\subsection{Transmit Polarforming Optimization}

For fixed $\bm W$ and $\bm\phi$, the transmit PSV $\bm\xi$ affects both Bob's achievable rate and Willie's received signal power. We again adopt cyclic coordinate ascent and optimize one phase $\xi_n$ at a time.

Define
\begin{equation}
	\hat{\bm P}(\bm\phi)
	\triangleq
	\bm G^H(\bm\phi)\bm P_{\rm b}
	\in\mathbb C^{N_\rmb\times 2N_\rmt},
\end{equation}
and let
\begin{equation}
	\hat{\bm P}_n(\bm\phi)
	\triangleq
	\begin{bmatrix}
		\bm c_n & \bm d_n
	\end{bmatrix}
	\in\mathbb C^{N_\rmb\times 2}
\end{equation}
denote the $n$-th $N_\rmb\times 2$ block of $\hat{\bm P}(\bm\phi)$, where $\bm c_n,\bm d_n\in\mathbb C^{N_\rmb\times 1}$. The $n$-th column of the effective Alice-Bob channel is
\begin{equation}
	\bm h_{{\rm b},n}(\xi_n)
	=
	\frac{1}{\sqrt2}
	\left(
	\bm c_n+e^{j\xi_n}\bm d_n
	\right).
	\label{hb_n_xi}
\end{equation}

Likewise, the $n$-th entry of the effective Alice-Willie channel is
\begin{equation}
	q_n(\xi_n)
	=
	\frac{1}{\sqrt2}
	\left(
	u_n+e^{j\xi_n}v_n
	\right),
	\label{qw_n_xi}
\end{equation}
where $u_n$ is the co-polar V-to-V coefficient and $v_n$ is the cross-polar H-to-V leakage coefficient associated with the $n$-th transmit PRA. Therefore,
\begin{equation}
	{\bm q}_{\rm w}^H(\bm\xi)
	=
	\begin{bmatrix}
		q_1(\xi_1) & \cdots & q_{N_\rmt}(\xi_{N_\rmt})
	\end{bmatrix}.
\end{equation}

Fix all phases except $\xi_n$, and let $\bm H_{{\rm b},-n}$ be $\bm H_{\rm b}(\bm\xi,\bm\phi)$ with its $n$-th column removed.
\begin{samepage}
After placing index $n$ last through a simultaneous row and column permutation, partition the covariance matrix as
\begin{equation}
	\bm W=
	\begin{bmatrix}
		\bm W_{-n,-n} & \bm w_{-n,n}\\
		\bm w_{-n,n}^H & w_{n,n}
	\end{bmatrix},
\end{equation}
\end{samepage}
where $w_{n,n}$ is the $n$-th diagonal entry of $\bm W$. The achievable rate is then
\begin{align}
	R_{\rm b}(\xi_n)
	&=
	\log_2\det\Big(
	\bm A_n
	+
	\frac{1}{\sigma_{\rm b}^2}
	\bm h_{{\rm b},n}(\xi_n)w_{n,n}\bm h_{{\rm b},n}^H(\xi_n)
	\notag\\
	&\qquad
	+
	\frac{1}{\sigma_{\rm b}^2}
	\bm h_{{\rm b},n}(\xi_n)\bm r_n^H
	+
	\frac{1}{\sigma_{\rm b}^2}
	\bm r_n\bm h_{{\rm b},n}^H(\xi_n)
	\Big),
	\label{Rb_xi_rank_two}
\end{align}
where
\begin{align}
	\bm A_n
	&\triangleq
	\bm I_{N_\rmb}
	+
	\frac{1}{\sigma_{\rm b}^2}
	\bm H_{{\rm b},-n}
	\bm W_{-n,-n}
	\bm H_{{\rm b},-n}^H,\\
	\bm r_n
	&\triangleq
	\bm H_{{\rm b},-n}\bm w_{-n,n}.
\end{align}
Since $\bm A_n\succ\bm0$ is independent of $\xi_n$, the dependence of $R_{\rm b}$ on this phase can be isolated. Define
\begin{equation}
	\alpha\triangleq\frac{1}{\sigma_{\rm b}^2},
	\qquad
	\bm C_n\triangleq\bm A_n^{-1},
\end{equation}
and
\begin{align}
	x_n(\xi_n)
	&\triangleq
	\bm h_{{\rm b},n}^H(\xi_n)
	\bm C_n
	\bm h_{{\rm b},n}(\xi_n),\\
	y_n(\xi_n)
	&\triangleq
	\bm h_{{\rm b},n}^H(\xi_n)
	\bm C_n
	\bm r_n,\\
	t_n
	&\triangleq
	\bm r_n^H\bm C_n\bm r_n.
\end{align}
Applying the rank-two matrix determinant lemma to Eq.~\eqref{Rb_xi_rank_two} yields
\begin{align}
	R_{\rm b}(\xi_n)
	&=
	\log_2\det(\bm A_n)
	+
	\log_2\Big(
	1
	+
	\alpha w_{n,n}x_n(\xi_n)
	\notag\\
	&\qquad\qquad
	+
	2\alpha\Re\{y_n(\xi_n)\}
	\notag\\
	&\qquad\qquad
	+
	\alpha^2
	\big(
	|y_n(\xi_n)|^2
	-
	x_n(\xi_n)t_n
	\big)
	\Big).
	\label{Rb_xi_detlemma}
\end{align}
Substituting Eq.~\eqref{hb_n_xi} into Eq.~\eqref{Rb_xi_detlemma}, the argument of the second logarithm reduces to
\begin{equation}
	\kappa_n
	+
	2\Re\left\{
	e^{j\xi_n}\theta_n
	\right\},
	\label{Rb_xi_affine_cosine}
\end{equation}
where $\kappa_n$ is independent of $\xi_n$ and is given by
\begin{align}
	\kappa_n
	\triangleq\;
	&1
	+
	\frac{\alpha w_{n,n}}{2}
	\left(
	\bm c_n^H\bm C_n\bm c_n
	+
	\bm d_n^H\bm C_n\bm d_n
	\right)
	\notag\\
	&+
	\sqrt{2}\alpha
	\Re\left\{
	\bm c_n^H\bm C_n\bm r_n
	\right\}
	\notag\\
	&+
	\frac{\alpha^2}{2}
	\left|\bm c_n^H\bm C_n\bm r_n\right|^2
	+
	\frac{\alpha^2}{2}
	\left|\bm d_n^H\bm C_n\bm r_n\right|^2
	\notag\\
	&-
	\frac{\alpha^2t_n}{2}
	\left(
	\bm c_n^H\bm C_n\bm c_n
	+
	\bm d_n^H\bm C_n\bm d_n
	\right),
	\label{kappa_n_def}
\end{align}
and
\begin{align}
	\theta_n
	\triangleq\;
	&\frac{1}{2}
	\left(
	\alpha w_{n,n}
	-
	\alpha^2 t_n
	\right)
	\bm c_n^H\bm C_n\bm d_n
	+
	\frac{\alpha}{\sqrt2}
	\bm r_n^H\bm C_n\bm d_n
	\notag\\
	&+
	\frac{\alpha^2}{2}
	\left(
	\bm c_n^H\bm C_n\bm r_n
	\right)
	\left(
	\bm r_n^H\bm C_n\bm d_n
	\right).
	\label{theta_n_def}
\end{align}
\begin{samepage}
Since the logarithm is strictly increasing, the unconstrained rate-maximizing phase is
\begin{equation}
	\xi_{n,{\rm rate}}^\star
	=
	\begin{cases}
		\langle-\angle(\theta_n)\rangle_{2\pi}, & \theta_n\neq 0,\\
		\xi_n^{\rm current}, & \theta_n=0.
	\end{cases}
	\label{xi_rate_opt_correct}
\end{equation}
\end{samepage}

Next, consider Willie's received signal power
\begin{equation}
	S_{\rm w}(\bm\xi,\bm W)
	=
	{\bm q}_{\rm w}^H(\bm\xi)\bm W{\bm q}_{\rm w}(\bm\xi).
\end{equation}
Using the same permutation that places index $n$ last, partition ${\bm q}_{\rm w}(\bm\xi)$ as
\begin{equation}
	{\bm q}_{\rm w}(\bm\xi)
	=
	\begin{bmatrix}
		{\bm q}_{{\rm w},-n}(\bm\xi_{-n})\\
		q_n^*(\xi_n)
	\end{bmatrix},
\end{equation}
and partition $\bm W$ accordingly as
\begin{equation}
	\bm W=
	\begin{bmatrix}
		\bm W_{-n,-n} & \bm w_{-n,n}\\
		\bm w_{-n,n}^H & w_{n,n}
	\end{bmatrix}.
\end{equation}
Then,
\begin{align}
	S_{\rm w}(\bm\xi,\bm W)
	&=
	{\bm q}_{{\rm w},-n}^H
	\bm W_{-n,-n}
	{\bm q}_{{\rm w},-n}
	+
	w_{n,n}|q_n(\xi_n)|^2 \notag\\
	&\quad
	+
	2\Re\left\{
	q_n^*(\xi_n)
	{\bm q}_{{\rm w},-n}^H\bm w_{-n,n}
	\right\}.
\end{align}
Define
\begin{align}
	\omega_n
	&\triangleq
	{\bm q}_{{\rm w},-n}^H
	\bm W_{-n,-n}
	{\bm q}_{{\rm w},-n},\\
	\ell_n
	&\triangleq
	{\bm q}_{{\rm w},-n}^H\bm w_{-n,n}.
\end{align}
Substituting Eq.~\eqref{qw_n_xi} into the above expression yields
\begin{equation}
	S_{\rm w}(\bm\xi,\bm W)
	=
	s_{0,n}
	+
	2\Re\left\{
	e^{j\xi_n}s_{1,n}
	\right\},
	\label{sw_xi_final}
\end{equation}
where $s_{0,n}$ is independent of $\xi_n$ and is given by
\begin{equation}
	s_{0,n}
	\triangleq
	\omega_n
	+
	\frac{w_{n,n}}{2}
	\left(
	|u_n|^2+|v_n|^2
	\right)
	+
	\sqrt2\Re\left\{
	u_n^*\ell_n
	\right\},
	\label{sw_constant_def}
\end{equation}
and
\begin{equation}
	s_{1,n}
	\triangleq
	\frac{w_{n,n}}{2}u_n^*v_n
	+
	\frac{1}{\sqrt2}v_n\ell_n^*.
	\label{sw_harmonic_def}
\end{equation}
In deriving Eq.~\eqref{sw_xi_final}, we used the identity
\begin{equation*}
	\Re\{e^{-j\xi_n}v_n^*\ell_n\}
	=
	\Re\{e^{j\xi_n}v_n\ell_n^*\}.
\end{equation*}

The coordinate subproblem is therefore
\begin{subequations}\label{Pxi_n}
\begin{align}
	\mbox{${\cal P}_{4,n}$}\quad
	\max_{\xi_n\in\mathbb T}\quad
	&
	\kappa_n
	+
	2\Re\left\{
	e^{j\xi_n}\theta_n
	\right\}
	\label{Pxi_n_obj}\\
	\mathrm{s.t.}\quad
	& \text{Eq.~}\eqref{P2_willie}.
	\label{Pxi_n_cons}
\end{align}
\end{subequations}
${\cal P}_{4,n}$ is a scalar constant-modulus optimization problem with a single trigonometric inequality constraint. If $\xi_{n,{\rm rate}}^\star$ in Eq.~\eqref{xi_rate_opt_correct} satisfies Eq.~\eqref{Pxi_n_cons}, then it is the optimal solution to ${\cal P}_{4,n}$. Otherwise, the optimal solution must lie on the boundary of Eq.~\eqref{Pxi_n_cons}, i.e.,
\begin{equation}
	S_{\rm w}(\bm\xi,\bm W)=\tau_{\epsilon,L}.
	\label{Pxi_boundary}
\end{equation}

When $s_{1,n}\neq 0$, writing $s_{1,n}=|s_{1,n}|e^{j\angle(s_{1,n})}$ and using Eq.~\eqref{sw_xi_final}, the boundary points satisfying Eq.~\eqref{Pxi_boundary} are
\begin{equation}
	\xi_n^{(\pm)}
	=
	\left\langle-\angle(s_{1,n})
	\pm
	\arccos\left(
	\frac{\tau_{\epsilon,L}-s_{0,n}}{2|s_{1,n}|}
	\right)\right\rangle_{2\pi},
	\label{xi_boundary_points}
\end{equation}
provided that
\begin{equation}
	\left|
	\tau_{\epsilon,L}-s_{0,n}
	\right|
	\le
	2|s_{1,n}|.
\end{equation}
The optimal phase is then selected as
\begin{equation}
	\xi_n^\star
	=
	\arg\max_{\xi_n\in\{\xi_n^{(+)},\xi_n^{(-)}\}}
	\left(
	\kappa_n
	+
	2\Re\left\{
	e^{j\xi_n}\theta_n
	\right\}
	\right).
	\label{xi_opt_boundary}
\end{equation}
If $s_{1,n}=0$, Willie's received signal power is independent of $\xi_n$, and the phase in Eq.~\eqref{xi_rate_opt_correct} that maximizes the rate is feasible whenever the current AO iterate is feasible. More generally, feasibility of the current iterate guarantees that the boundary points in Eq.~\eqref{xi_boundary_points} exist whenever the unconstrained phase that maximizes the rate is infeasible.

Successively solving Eq.~\eqref{Pxi_n} for $n=1,\ldots,N_\rmt$ constitutes one sweep over the transmit phases. Each update preserves feasibility and cannot decrease Bob's achievable rate.

\subsection{Overall Algorithm, Convergence, and Complexity}

Algorithm~\ref{algo_covert} summarizes the proposed AO procedure for ${\cal P}_0$, implemented through its equivalent reformulation ${\cal P}_1$. The initialization $\bm W^{(0)}=\bm0$ is feasible for any transmit and receive PSVs, provided that $P\ge0$ and $\tau_{\epsilon,L}\ge0$. Each covariance update solves ${\cal P}_2$ optimally for the current PSVs, and every phase update solves its scalar coordinate subproblem while preserving feasibility. Hence, no update decreases Bob's achievable rate. Because the transmit power is bounded and both phase domains are compact, the achievable rate is bounded above; consequently, the sequence of objective values converges. This monotonicity argument does not imply global optimality of the obtained solution.

\begin{algorithm}[!t]
	\caption{AO algorithm for ${\cal P}_0$}
	\label{algo_covert}
	\setstretch{0.86}
	\small
	\KwIn{$\bm P_{\rm b}$, $\bm h_{\rm w}^H$, $\sigma_{\rm b}^2$, $\sigma_{\rm w}^2$, $P$, $L$, $\epsilon$, $\varepsilon_{\rm AO}$, and $K_{\max}$}
	\KwOut{$\bm W$, $\bm\xi$, and $\bm\phi$}
	Compute $\rho_{\epsilon,L}$ from Eq.~\eqref{rho_epsilon_def} and set $\tau_{\epsilon,L}=\sigma_{\rm w}^2\rho_{\epsilon,L}$\;
	Choose $\bm\xi\in\mathbb T^{N_\rmt}$ and $\bm\phi\in\mathbb T^{N_\rmb}$; set $\bm W=\bm0$, $k=0$, and evaluate $R_{\rm b}^{(0)}$\;
	\Repeat{$\delta^{(k)}\le\varepsilon_{\rm AO}$ or $k=K_{\max}$}{
		Set $k\leftarrow k+1$\;
		Update $\bm W$ by solving ${\cal P}_2$ using Eq.~\eqref{W_closed}\;
		\For{$n_\rmb=1$ \KwTo $N_\rmb$}{
			Update $\phi_{n_\rmb}$ using Eq.~\eqref{phi_closed_form}\;
		}
		\For{$n_\rmt=1$ \KwTo $N_\rmt$}{
			Compute $\xi_{n_\rmt,{\rm rate}}^\star$ using Eq.~\eqref{xi_rate_opt_correct}\;
			\eIf{$\xi_{n_\rmt,{\rm rate}}^\star$ satisfies Eq.~\eqref{P2_willie}}{
				Set $\xi_{n_\rmt}\leftarrow\xi_{n_\rmt,{\rm rate}}^\star$\;
			}{
				Update $\xi_{n_\rmt}$ using Eq.~\eqref{xi_opt_boundary}\;
			}
		}
		Evaluate $R_{\rm b}^{(k)}$ and set $\delta^{(k)}=|R_{\rm b}^{(k)}-R_{\rm b}^{(k-1)}|/\max\{1,R_{\rm b}^{(k-1)}\}$\;
	}
	\Return{$\bm W$, $\bm\xi$, and $\bm\phi$}\;
\end{algorithm}

Let $I_{\rm d}$ denote the number of dual function evaluations required by the two-dimensional search in the covariance update. Using direct matrix decompositions, the dominant complexity in each iteration is $\mathcal O(I_{\rm d}N_\rmt^3)$ for updating $\bm W$, $\mathcal O(N_\rmb N_\rmt^3)$ for one sweep over the receive phases, and $\mathcal O(N_\rmt N_\rmb^3)$ for one sweep over the transmit phases. Thus, for $K_{\rm AO}$ outer iterations, the overall complexity is
\begin{equation}
\mathcal O\!\left(K_{\rm AO}\left[I_{\rm d}N_\rmt^3+N_\rmb N_\rmt^3+N_\rmt N_\rmb^3\right]\right),
\end{equation}
under a direct implementation. Matrix inverse updates can further reduce this computational cost.

\section{Numerical Results}\label{sec:numerical_results}

This section first validates the exact finite-observation-length DEP analysis and then evaluates the convergence of the proposed AO algorithm and the achievable rate of the joint polarforming design. The results also quantify the effects of the covertness requirement, Willie's observation length, and the polarized channel parameters. Unless otherwise stated, the default simulation parameters are summarized in Table~\ref{tab:simulation_parameters}.

\begin{table}[!t]
\centering
\caption{Default Simulation Parameters}
\label{tab:simulation_parameters}
\footnotesize
\setlength{\tabcolsep}{3pt}
\renewcommand{\arraystretch}{0.96}
\begin{tabular}{@{}>{\raggedright\arraybackslash}p{0.62\columnwidth}
                    >{\centering\arraybackslash}p{0.32\columnwidth}@{}}
\hline
Parameter & Default value \\
\hline
Numbers of PRAs, $(N_\rmt,N_\rmb)$
    & $(2,2)$ \cite{zhou2024polarforming} \\
Maximum transmit power, $P$ & $10$~dBm \\
Observation length, $L$ & $200$ samples \\
Covertness parameter, $\epsilon$ & $0.05$ \cite{11603394} \\
Noise powers, $(\sigma_{\rm b}^2,\sigma_{\rm w}^2)$ & $(-90,-90)$~dBm \\
Antenna inverse XPIs, $(\mu_{\rm t},\mu_{\rm b})$ & $(0.1,0.1)$ \\
Channel inverse XPRs, $(\chi_{\rm b},\chi_{\rm w})$
    & $(0.2,0.1)$ \cite{11235492,11603394} \\
Average Alice-Bob channel power gain, $\beta_{\rm b}$ & $-90$~dB \\
Total average Alice-Willie channel power gain,
    $\beta_{{\rm w},V}+\beta_{{\rm w},H}$ & $-90$~dB \\
Monte Carlo channel realizations & $100$ \\
\hline
\end{tabular}
\end{table}

All transmit and noise powers are expressed in watts in the simulations and reported in dBm, whereas $\beta_{\rm b}$, $\beta_{{\rm w},V}$, and $\beta_{{\rm w},H}$ are dimensionless channel power gains reported in dB. All channels are independently generated according to the Rayleigh fading models in Section~\ref{sec:system_model}. For the Alice-Willie link, $u_n\sim\mathcal{CN}(0,\beta_{{\rm w},V})$ and $v_n\sim\mathcal{CN}(0,\beta_{{\rm w},H})$, with $10\log_{10}(\beta_{{\rm w},V}+\beta_{{\rm w},H})=-90$~dB maintained as $\chi_{\rm w}$ varies.

We compare the proposed joint design with the following benchmarks \cite{11591198}:
\begin{itemize}
    \item \emph{Transmit polarforming only (TPO):} Fix $\bm\phi=\bm0$ and optimize $(\bm W,\bm\xi)$.
    \item \emph{Fixed zero phase (FZP):} Set $\bm\xi=\bm0$ and $\bm\phi=\bm0$, and optimize only $\bm W$.
    \item \emph{Vertical linearly polarized antenna (LPA):} Fix the polarization vector to $[1,0]^T$ at Alice and Bob, and optimize only $\bm W$.
    \item \emph{Left-hand circularly polarized antenna (CPA):} Fix the polarization vectors to $[1,j]^T/\sqrt{2}$ at Alice and $[1,j]^T$ at Bob, and optimize only $\bm W$.
    \item \emph{Random phase (RP):} Draw the transmit and receive phases independently and uniformly over $[0,2\pi)$, and optimize only $\bm W$.
    \item \emph{Achievable-rate upper bound (UB):} Remove Eq.~\eqref{P2_willie} and jointly optimize all variables.
\end{itemize}

\subsection{Validation of the Exact Finite-Observation-Length DEP}
Fig.~\ref{fig:dep_snr} compares the exact minimum DEP in Eq.~\eqref{DEP_exact} with Monte Carlo results for different observation lengths. The close agreement validates the finite-observation-length DEP characterization. The DEP decreases as Willie's received SNR increases, confirming that stronger signal leakage makes the transmission easier to detect. Moreover, for a given received SNR, a longer observation length yields a lower DEP because Willie can accumulate more evidence for distinguishing the two hypotheses. These results corroborate Proposition~\ref{prop_dep_monotonicity} and show that the allowable signal leakage depends critically on Willie's observation length, which must therefore be explicitly incorporated into the covertness constraint.

\begin{figure}[!t]
	\centering
	\includegraphics[width=\columnwidth]{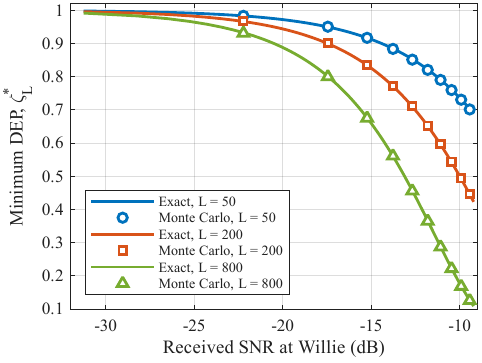}
	\caption{Exact minimum DEP and Monte Carlo results versus the received SNR for different observation lengths in samples.}
	\label{fig:dep_snr}
\end{figure}

\begin{figure}[!t]
	\centering\vspace{-4mm}
	\includegraphics[width=\columnwidth]{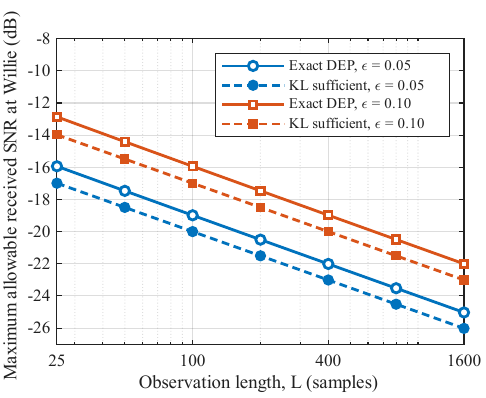}
\caption{Maximum allowable received SNR at Willie versus his observation length in samples under the exact DEP constraint and the sufficient Pinsker/Kullback--Leibler (KL) condition.}
	\label{fig:admissible_snr}
\end{figure}

In Fig.~\ref{fig:admissible_snr}, the maximum allowable received SNR at Willie is the largest $\rho$ satisfying $\zeta_L^\star(\rho)\ge 1-\epsilon$. It decreases with $L$ under both the exact DEP constraint and the sufficient KL condition $\rho\le 2\epsilon/\sqrt{L}$ adopted in \cite{11603394}. The exact threshold is $0.99$--$1.12$~dB higher for $\epsilon\in\{0.05,0.1\}$, showing that the sufficient condition unnecessarily reduces the feasible received signal power and that directly imposing the exact DEP constraint avoids this conservatism.

\subsection{Algorithm Convergence and Rate Comparison}
In Fig.~\ref{fig:ao_convergence}, Channels 1--4 denote four independently generated realizations of the polarized Alice--Bob and Alice--Willie channels under the same default parameters. The achievable rate increases monotonically and converges within eight iterations for all four realizations, demonstrating the rapid convergence of the proposed AO algorithm.

\begin{figure}[!t]
	\centering
	\includegraphics[width=\columnwidth]{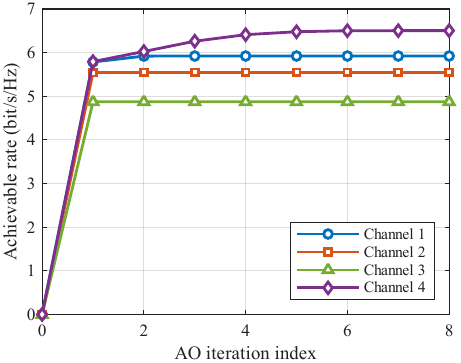}
	\caption{Convergence of the proposed AO algorithm for four representative polarized channel realizations.}
	\label{fig:ao_convergence}
\end{figure}

Fig.~\ref{fig:rate_power} compares the average achievable rates of the joint polarforming design and the benchmark schemes. At $P=10$~dBm, the joint design outperforms TPO by $8.8\%$, demonstrating the additional rate gain provided by adapting Bob's receive PSV. Its $21.0\%$ gain over FZP shows the benefit of jointly optimizing the transmit and receive PSVs rather than relying only on transmit covariance optimization with fixed phases. The gains of $53.8\%$ and $45.3\%$ over LPA and CPA, respectively, indicate that fixed polarization states cannot adapt to the polarized Alice--Bob channel while controlling the leakage to Willie. The $37.5\%$ gain over RP further confirms that the phase shifts must be deliberately coordinated with the transmit covariance rather than randomly selected. Across the considered power range, the joint design consistently outperforms all covert benchmarks, while its gap to UB widens as $P$ increases. This trend indicates that the covertness constraint becomes increasingly restrictive at high transmit power, yet joint spatial and polarization adaptation uses the available power more effectively than the benchmark designs.

\begin{figure}[!t]
	\centering
	\includegraphics[width=\columnwidth]{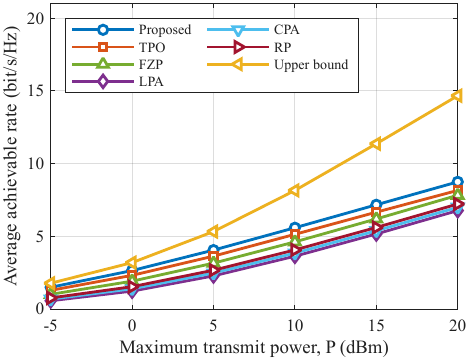}
	\caption{Average achievable rate versus the maximum transmit power for the joint polarforming design and benchmark schemes.}
	\label{fig:rate_power}
\end{figure}

\subsection{Rate and Covertness Tradeoffs}
Fig.~\ref{fig:rate_tradeoffs}(a) shows that relaxing the covertness requirement by increasing $\epsilon$ raises the maximum allowable received signal power at Willie and thereby improves Bob's average achievable rate. The diminishing rate increase indicates that the Alice--Bob channel and the maximum transmit power also limit performance after the covertness constraint is relaxed.

Fig.~\ref{fig:rate_tradeoffs}(b) shows that Bob's average achievable rate decreases as Willie's observation length increases. A longer observation interval strengthens Willie's detector and lowers the maximum allowable received signal power at Willie. The modest rate reduction indicates that joint spatial and polarization adaptation accommodates much of the tightened covertness requirement by redirecting the transmitted signal.

\begin{figure}[!t]
	\centering
	\includegraphics[width=\columnwidth]{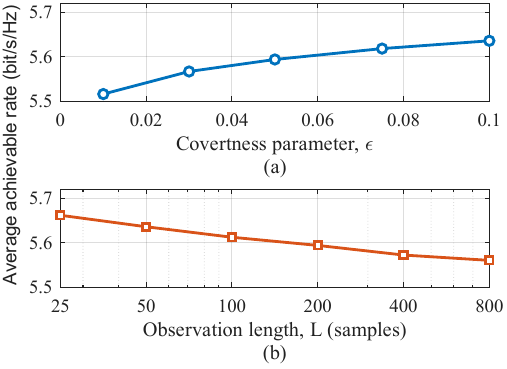}
	\caption{Average achievable rate versus (a) the covertness parameter and (b) Willie's observation length.}
	\label{fig:rate_tradeoffs}
\end{figure}

\subsection{Impact of Polarization Coupling and Cross-Polar Leakage}
Fig.~\ref{fig:rate_chib} shows that the joint design varies by only $1.3\%$ over the considered inverse XPR range, whereas the LPA rate decreases by $19.4\%$ as $\chi_{\rm b}$ increases. This contrast indicates that jointly adapting the transmit and receive PSVs compensates for changes in the co-polar and cross-polar channel components. Adaptive polarforming is therefore particularly beneficial when the degree of channel depolarization is strong or uncertain, while a fixed linear polarization may incur substantial performance loss.

\begin{figure}[!t]
	\centering
	\includegraphics[width=\columnwidth]{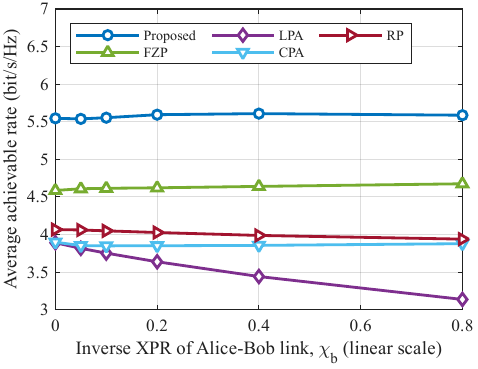}
	\caption{Average achievable rate versus the inverse XPR of the Alice-Bob link on a linear scale.}
	\label{fig:rate_chib}
\end{figure}

Fig.~\ref{fig:rate_chiw} keeps the total Alice--Willie channel power gain fixed and redistributes it between the co-polar and cross-polar components through $\chi_{\rm w}$. Introducing cross-polar leakage initially reduces the joint-design rate by $3.5\%$ because the transmit PSV then affects both Bob's channel and Willie's received signal power; over the remaining positive range, however, the rate varies by less than $0.5\%$. At $\chi_{\rm w}=0.1$, the joint design retains gains of approximately $35\%$--$54\%$ over RP, CPA, and LPA. Thus, when cross-polar leakage is nonnegligible, the transmit covariance and both PSVs should be optimized jointly rather than selecting the polarization state independently of the covertness constraint.

\begin{figure}[!t]
	\centering
	\includegraphics[width=\columnwidth]{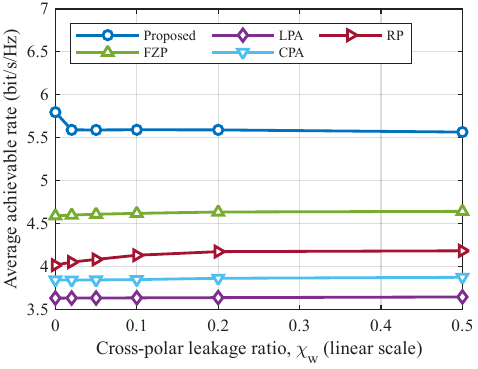}
	\caption{Average achievable rate versus the Alice-Willie cross-polar leakage ratio on a linear scale for a fixed total average channel power gain.}
	\label{fig:rate_chiw}
\end{figure}

\FloatBarrier
\section{Conclusion}\label{sec:conclusion}
This paper investigated PRA-aided MIMO covert communication by jointly exploiting spatial and polarization DoFs. We derived Willie's optimal radiometric threshold and exact minimum DEP for a finite observation length, transformed the covertness requirement into an equivalent received signal power constraint, and characterized the zero-leakage requirement for asymptotic covertness under a fixed transmission design. For the finite-observation-length design, we developed an AO algorithm combining a generalized water-filling covariance update with closed-form transmit and receive PSV updates. Numerical results validated the analysis and demonstrated that joint polarforming improves achievable rate and adapts effectively to changes in the covertness requirement and polarized channel composition. Future work will consider imperfect polarized channel knowledge, noise uncertainty, and wardens with adaptive polarization or multiple antennas. Network-level extensions to large-scale UAV systems may further integrate polarforming-enabled covert links with edge server placement and lightweight consensus mechanisms \cite{11146623,11180147}.

\appendices
\section{Proof of Proposition~\ref{prop_dep_monotonicity}}\label{app:dep_monotonicity}
Let $X\triangleq LT/\sigma_{\rm w}^2$. Under $\mathcal H_0$ and $\mathcal H_1$, the densities of $X$ are, respectively,
\begin{align}
	f_0(x)
	&=
	\frac{x^{L-1}e^{-x}}{\Gamma(L)},\\
	f_1(x;\rho)
	&=
	\frac{x^{L-1}e^{-x/(1+\rho)}}{\Gamma(L)(1+\rho)^L},
	\qquad x\ge0.
\end{align}
For $\rho>0$, the likelihood ratio $f_1(x;\rho)/f_0(x)$ is strictly increasing in $x$ and equals one at
\begin{equation}
	a(\rho)
	\triangleq
	L\frac{1+\rho}{\rho}\ln(1+\rho).
\end{equation}
Define $b(\rho)\triangleq a(\rho)/(1+\rho)=L\ln(1+\rho)/\rho$ and
\begin{equation}
	g_L(x)\triangleq\frac{x^{L-1}e^{-x}}{\Gamma(L)}.
\end{equation}
The exact minimum DEP in Eq.~\eqref{DEP_exact} can then be written as
\begin{equation}
	\zeta_L^\star(\rho)
	=
	\frac{\Gamma(L,a(\rho))}{\Gamma(L)}
	+
	\frac{\gamma(L,b(\rho))}{\Gamma(L)}.
\end{equation}
Differentiating with respect to $\rho$ gives
\begin{equation}\label{dep_derivative_step}
	\frac{\mathrm d\zeta_L^\star(\rho)}{\mathrm d\rho}
	=
	-g_L(a(\rho))a'(\rho)
	+g_L(b(\rho))b'(\rho).
\end{equation}
At the unique density intersection, $f_0(a(\rho))=f_1(a(\rho);\rho)$, which is equivalent to
\begin{equation}
	g_L(b(\rho))=(1+\rho)g_L(a(\rho)).
\end{equation}
Moreover, differentiating $b(\rho)=a(\rho)/(1+\rho)$ yields
\begin{equation}
	(1+\rho)b'(\rho)
	=
	a'(\rho)-\frac{a(\rho)}{1+\rho}.
\end{equation}
Substituting these two identities into Eq.~\eqref{dep_derivative_step} gives
\begin{equation}
	\frac{\mathrm d\zeta_L^\star(\rho)}{\mathrm d\rho}
	=
	-\frac{a(\rho)}{1+\rho}g_L(a(\rho))<0,
	\qquad \rho>0.
\end{equation}
Thus, $\zeta_L^\star(\rho)$ is strictly decreasing for $\rho>0$. Since $a(\rho)\to L$ and $b(\rho)\to L$ as $\rho\to0$, the identity $\Gamma(L,L)+\gamma(L,L)=\Gamma(L)$ gives $\lim_{\rho\to0}\zeta_L^\star(\rho)=1$, which agrees with the definition at $\rho=0$. Finally, $a(\rho)\to\infty$ and $b(\rho)\to0$ as $\rho\to\infty$, yielding $\lim_{\rho\to\infty}\zeta_L^\star(\rho)=0$. This also establishes continuity on $[0,\infty)$.

\balance
\bibliographystyle{IEEEtran}
\bibliography{IEEEref}

@ARTICLE{11207211,
  author={Cheng, Xiang and Zhang, Jianan and others},
  journal={J. Commun. Inf. Netw.}, 
  title={Integrated {AI} and Communications: A Two-Way Catalysis Toward {6G} and Beyond}, 
  year={Sept. 2025},
  volume={10},
  number={3},
  pages={191-200},
  doi={10.23919/JCIN.2025.11207211}}

@ARTICLE{11603394,
  author={Wu, Yang and Ma, Ruiqian and Lin, Zhi and Zhang, Peng and Wang, Guohua and Yang, Weiwei},
  journal={IEEE Wireless Commun. Lett.},
  title={Polarforming Design for Covert Communication},
  year={2026},
  note={Early access, doi: 10.1109/LWC.2026.3712221},
  doi={10.1109/LWC.2026.3712221}}

@article{xu2026sagsin_outage_capacity,
	author  = {Jinpeng Xu and Yingqi He and Lin Zhou and Jingjing Wang and Jun Du and Chunxiao Jiang},
	title   = {{6G} Space-Air-Ground-Sea Integrated Networks: Outage and Ergodic Capacity Analysis},
	journal = {IEEE Trans. Wireless Commun.},
	year={2026},
	volume={25},
	number={},
	pages={19692-19708}
}

@ARTICLE{11165349,
  author={Shao, Xiaodan and Zhang, Rui and Zhou, Haibo and Jiang, Qijun and Zhou, Conghao and Zhuang, Weihua and Shen, Xuemin},
  journal={IEEE J. Sel. Areas Commun.},
  title={Polarforming Antenna Enhanced Sensing and Communication: Modeling and Optimization},
  year={2026},
  volume={44},
  number={},
  pages={416-431},
  doi={10.1109/JSAC.2025.3610487}}

@article{xu2026digital,
 	title={Digital Twin Meets World Model: Proactive Control for {UAV} Swarms in Low-Altitude Wireless Networks},
 	author={Xu, Jinpeng and Xie, Xin and Chen, Qun and Mao, Jinfeng and Zhang, Wei},
 	journal={IEEE Trans. Netw. Sci. Eng.},
 	year={2026},
 	volume={},
 	number={},
 	pages={1-19},
 	doi={10.1109/TNSE.2026.3731054}
 }

@ARTICLE{11558360,
  author={Li, Haoyue and You, Li and Qian, Mengyu and Wang, Yuanshuo and Zhou, Huibin and Gao, Xiqi},
  journal={IEEE Wireless Commun. Lett.},
  title={Polarforming and Precoding Design for Massive {MIMO} {LEO} Satellite Communications},
  year={2026},
  note={Early access, doi: 10.1109/LWC.2026.3702745},
  doi={10.1109/LWC.2026.3702745}}

@ARTICLE{11235492,
  author={Ding, Jingze and Zhou, Zijian and Jiao, Bingli and Zhang, Rui},
  journal={IEEE Wireless Commun. Lett.},
  title={Secure Wireless Communication via Polarforming},
  year={2026},
  volume={15},
  number={},
  pages={480-484},
  doi={10.1109/LWC.2025.3630216}}

@ARTICLE{11547205,
  author={Mehrzad, Saba and Ebrahimi, Kamran and Attaran, Ali and Fang, Fang},
  journal={IEEE Wireless Commun. Lett.},
  title={Polarforming for {IRS}-Assisted Flexible Intelligent Metasurfaces-Enabled Wireless Communications},
  year={2026},
  volume={15},
  number={},
  pages={3576-3580},
  doi={10.1109/LWC.2026.3699597}}

@ARTICLE{11206360,
  author={Zhou, Zijian and Ding, Jingze and Zhang, Rui},
  journal={IEEE Wireless Commun. Lett.},
  title={Polarforming Design for Movable Antenna Systems},
  year={2026},
  volume={15},
  number={},
  pages={161-165},
  doi={10.1109/LWC.2025.3622756}}

@ARTICLE{11357497,
  author={Zhou, Zijian and Ding, Jingze and Xu, Jie and Zhang, Rui},
  journal={J. Commun. Inf. Netw.},
  title={Modeling and Optimization of Three-Dimensional Polarforming for Multiuser Communications},
  year={2025},
  month={Dec.},
  volume={10},
  number={4},
  pages={364-375},
  doi={10.23919/JCIN.2025.11357497}}

@ARTICLE{11347516,
  author={Ding, Jingze and Zhou, Zijian and Shao, Xiaodan and Jiao, Bingli and Zhang, Rui},
  journal={IEEE Commun. Mag.},
  title={Polarforming for Wireless Networks: Opportunities and Challenges},
  year={2026},
  month={Mar.},
  volume={64},
  number={3},
  pages={118-124},
  doi={10.1109/MCOM.001.2500341}}

@ARTICLE{Oestges_TWC_2008,
	author={Oestges, Claude and Clerckx, Bruno and Guillaud, Maxime and Debbah, Merouane},
	journal={IEEE Trans. Wireless Commun.},
	title={Dual-polarized wireless communications: {From} propagation models to system performance evaluation},
	year={2008},
	volume={7},
	month={Oct.},
	number={10},
	pages={4019-4031}
}

@misc{zhou2024polarforming,
	title={Polarforming for Wireless Communications: Modeling and Performance Analysis},
	author={Zhou, Zijian and Ding, Jingze and Wang, Chenbo and Jiao, Bingli and Zhang, Rui},
	year={2025},
	note={[Online]. Available: https://arxiv.org/abs/2409.07771},
}

@ARTICLE{10923651,
  author={Li, Xingwang and Liu, Musen and Dang, Shuping and Luong, Nguyen Cong and Yuen, Chau and Nallanathan, Arumugam and Niyato, Dusit},
  journal={IEEE Trans. Wireless Commun.},
  title={Covert Communications With Enhanced Physical Layer Security in {RIS}-Assisted Cooperative Networks},
  year={2025},
  month={Jul.},
  volume={24},
  number={7},
  pages={5605-5619},
  doi={10.1109/TWC.2025.3548024}}

@ARTICLE{11214184,
  author={Zhao, Xiang and Lu, Wencong and Yi, Changyan and Wang, Junyi and Peng, Jie},
  journal={IEEE Trans. Commun.},
  title={Multi-{UAV} Covert Communication With Informed Jammers: Design, Analysis, and Optimization},
  year={2026},
  volume={74},
  number={},
  pages={565-580},
  doi={10.1109/TCOMM.2025.3624167}}

@ARTICLE{10521454,
  author={Che, Bohan and Shi, Hui and Lu, Xingbo and Gao, Chang and Yang, Weiwei and Liu, Jue and Jiang, Junhao},
  journal={IEEE Trans. Veh. Technol.},
  title={Covert Multi-Channel Communication Against Interference-Assisted Proactive Detection},
  year={2024},
  month={Sep.},
  volume={73},
  number={9},
  pages={14033-14037},
  doi={10.1109/TVT.2024.3394734}}

@ARTICLE{11417854,
  author={Yang, Hanyu and Gong, Shiqi and Liu, Heng and Yu, Tao and Xing, Chengwen},
  journal={IEEE J. Sel. Areas Commun.},
  title={{MA}-Aided Integrated Sensing and Covert Communication Systems},
  year={2026},
  volume={44},
  number={},
  pages={3690-3704},
  doi={10.1109/JSAC.2026.3669141}}

@ARTICLE{11503892,
  author={Chen, Hongchi and Mu, Junsheng and Deng, Na and Wei, Haichao and Zhao, Nan},
  journal={IEEE Trans. Wireless Commun.},
  title={Air-to-Ground Covert Communication With Location and Interference Uncertainty},
  year={2026},
  volume={25},
  number={},
  pages={16187-16199},
  doi={10.1109/TWC.2026.3687670}}

@ARTICLE{10185203,
  author={Ma, Ruiqian and Yang, Weiwei and Guan, Xinrong and Lu, Xingbo and Song, Yi and Chen, Dechuan},
  journal={IEEE Internet Things J.},
  title={Covert {mmWave} Communications With Finite Blocklength Against Spatially Random Wardens},
  year={2024},
  month={Jan.},
  volume={11},
  number={2},
  pages={3402-3416},
  doi={10.1109/JIOT.2023.3296414}}

@ARTICLE{11122485,
  author={Li, Yueran and Deng, Na and Xing, Chengwen and Wei, Haichao and Zhao, Nan and Niyato, Dusit},
  journal={IEEE Trans. Commun.},
  title={Covert {UAV} Communication With Interference Uncertainty},
  year={2025},
  month={Dec.},
  volume={73},
  number={12},
  pages={15324-15336},
  doi={10.1109/TCOMM.2025.3597648}}

@ARTICLE{10244106,
  author={Li, Lintao and Chen, Zhilin and Chen, Riqing and Yang, Lei and Yan, Shihao},
  journal={IEEE Trans. Veh. Technol.},
  title={Covert Wireless Communication With Random Frequency Diverse Array},
  year={2024},
  month={Jan.},
  volume={73},
  number={1},
  pages={1473-1478},
  doi={10.1109/TVT.2023.3310986}}

@ARTICLE{10440606,
  author={Jiang, Yu'e and Wang, Liangmin and Chen, Hsiao-Hwa and Shen, Xuemin},
  journal={Proc. IEEE},
  title={Physical Layer Covert Communication in {B5G} Wireless Networks---Its Research, Applications, and Challenges},
  year={2024},
  month={Jan.},
  volume={112},
  number={1},
  pages={47-82},
  doi={10.1109/JPROC.2024.3364256}}

@ARTICLE{10879382,
  author={Zhao, Shuangrui and Liu, Jia and Shen, Yulong and Jiang, Xiaohong and Taleb, Tarik and Shiratori, Norio},
  journal={IEEE Trans. Inf. Forensics Security},
  title={On the Impact of Warden Collusion on Covert Communication in Wireless Networks},
  year={2025},
  volume={20},
  number={},
  pages={2297-2312},
  doi={10.1109/TIFS.2025.3540575}}

@ARTICLE{11313487,
  author={Liu, Mingqian and Wen, Zhaoxi and Chen, Yunfei and Tang, Jie and Wong, Kai-Kit and Yang, Xiaoniu},
  journal={IEEE J. Sel. Areas Commun.},
  title={Beamforming-Enabled Covert Communications for Multi-Position Warden},
  year={2026},
  volume={44},
  number={},
  pages={1722-1735},
  doi={10.1109/JSAC.2025.3647611}}

@ARTICLE{11215610,
  author={Chen, Chunqi and Wang, Manlin and Xu, Zhen and Lv, Xing and Xia, Bin},
  journal={IEEE Trans. Wireless Commun.},
  title={Performance Analysis and Optimization of {MIMO} Covert Communications With Finite-Alphabet Inputs},
  year={2026},
  volume={25},
  number={},
  pages={5922-5934},
  doi={10.1109/TWC.2025.3621695}}

@ARTICLE{10086651,
  author={Yang, Liang and Zhang, Wei and Bithas, Petros S. and Liu, Hongwu and Hasna, Mazen O. and Tsiftsis, Theodoros A. and Ng, Derrick Wing Kwan},
  journal={IEEE Trans. Veh. Technol.},
  title={Covert Transmission and Secrecy Analysis of {RS-RIS-NOMA}-Aided {6G} Wireless Communication Systems},
  year={2023},
  month={Aug.},
  volume={72},
  number={8},
  pages={10659-10670},
  doi={10.1109/TVT.2023.3262943}}

@ARTICLE{11141469,
  author={Qi, Peilin and Guo, Kefeng and Nauman, Ali and Wu, Qihui and Zhang, Lei and Wu, Zeke and Singh, Keshav},
  journal={IEEE Internet Things J.},
  title={Reliable Covert Communication in {NOMA}-Aided Cognitive Satellite Aerial Terrestrial Integrated Networks},
  year={2026},
  month={Mar.},
  volume={13},
  number={5},
  pages={7989-8002},
  doi={10.1109/JIOT.2025.3602642}}

@ARTICLE{10195191,
  author={Yang, Guo and Qian, Yuwen and Ren, Ke and Mei, Zhen and Shu, Feng and Zhou, Xiangwei and Wu, Wen},
  journal={IEEE J. Sel. Topics Signal Process.},
  title={Covert Wireless Communications for Augmented Reality Systems With Dual Cooperative {UAVs}},
  year={2023},
  month={Sep.},
  volume={17},
  number={5},
  pages={1119-1130},
  doi={10.1109/JSTSP.2023.3299116}}

@ARTICLE{11365966,
  author={Wang, Chao and Li, Zan and Zhou, Xiangnan and Jin, Liang},
  journal={IEEE J. Sel. Topics Signal Process.},
  title={Intelligent Cooperative Covert {ISAC} via Fluid Antennas: A {GAN}-Optimized Framework},
  year={2026},
  month={Apr.},
  volume={20},
  number={3},
  pages={262-276},
  doi={10.1109/JSTSP.2026.3658184}}

@ARTICLE{10970466,
  author={Chen, Bingwen and Ling, Xintong and Cao, Weihang and Wang, Jiaheng and Ding, Zhi},
  journal={IEEE J. Sel. Areas Commun.},
  title={Analysis of Channel Uncertainty in Trusted Wireless Services via Repeated Interactions}, 
  year={2025},
  month={Jun.},
  volume={43},
  number={6},
  pages={2248-2265},
  doi={10.1109/JSAC.2025.3560005}}

@ARTICLE{11164329,
  author={Gao, Pengpei and Wang, Zhining and Yu, Shimao and Song, Chunhe and Yu, Haibin},
  journal={IEEE Internet Things J.},
  title={Reliable and Trustworthy Local--Global Hierarchical Framework for Intrusion Detection in {6G-IoT} Networks},
  year={2026},
  month={Mar.},
  volume={13},
  number={5},
  pages={8077-8091},
  doi={10.1109/JIOT.2025.3609811}}

@ARTICLE{10844934,
  author={Jayasena, Aruna and Bachmann, Richard and Mishra, Prabhat},
  journal={IEEE Trans. Inf. Forensics Security},
  title={{CiseLeaks}: Information Leakage Assessment of Cryptographic Instruction Set Extension Prototypes},
  year={2025},
  volume={20},
  number={},
  pages={1551-1565},
  doi={10.1109/TIFS.2025.3531239}}

@ARTICLE{11534098,
  author={Wang, Jiaxing and Xu, Hengyan and Xie, Xin and Han, Rui and Bai, Lin and Shi, Guowei},
  journal={IEEE J. Sel. Topics Signal Process.},
  title={A Robust Physical Layer Security Scheme Based on Collaborative Beamforming for {UAV} Networks with Jitter},
  year={2026},
  note={Early access, doi: 10.1109/JSTSP.2026.3696183},
  doi={10.1109/JSTSP.2026.3696183}
}

@ARTICLE{10938035,
  author={Xu, Jinpeng and Bai, Lin and Xie, Xin and Zhou, Lin},
  journal={IEEE Trans. Wireless Commun.},
  title={Collaborative Secret and Covert Communications for Multi-User Multi-Antenna Uplink {UAV} Systems: Design and Optimization},
  year={2025},
  month={Jul.},
  volume={24},
  number={7},
  pages={6020-6035},
  doi={10.1109/TWC.2025.3550911}
}

@ARTICLE{10286081,
  author={Wang, Jingjing and Jiao, Zihan and Chen, Jianrui and Hou, Xiangwang and Yang, Tingting and Lan, Dapeng},
  journal={IEEE Trans. Netw. Sci. Eng.},
  title={Blockchain-Aided Secure Access Control for {UAV} Computing Networks}, 
  year={2024},
  month={Nov.--Dec.},
  volume={11},
  number={6},
  pages={5267-5279},
  doi={10.1109/TNSE.2023.3324639}}

@ARTICLE{10949678,
  author={Lin, Xin and Liu, Aijun and others},
  journal={IEEE Trans. Wireless Commun.},
  title={Secure Beamforming and Anti-Jamming Coalition Formation for Air-Terrestrial Integrated Ad-Hoc Networks},
  year={2025},
  volume={24},
  number={8},
  pages={6722-6736},
  doi={10.1109/TWC.2025.3555600}}

@ARTICLE{XJP_MIMOCovert,
	author={Bai, Lin and Xu, Jinpeng and Zhou, Lin},
	journal={IEEE Trans. Commun.}, 
	title={Covert Communication for Spatially Sparse mm{W}ave Massive {MIMO} Channels}, 
	year={Mar. 2023},
	volume={71},
	number={3},
	pages={1615-1630},}

@ARTICLE{11146623,
  author={Xie, Xin and Xu, Jinpeng and Chen, Qun and Gong, Silong and Zhang, Tao},
  journal={IEEE Internet Things J.},
  title={Multiobjective Optimization of Edge Server Placement in {UAV} Ad Hoc Networks},
  year={2026},
  month={Feb.},
  volume={13},
  number={4},
  pages={5682-5693},
  doi={10.1109/JIOT.2025.3605295}
}

@misc{xu2026exploitinguavattitudecovert,
	title={Exploiting {UAV} Attitude for Covert Communications: Dynamics-Consistent Trajectory-Attitude Co-Design}, 
	author={Jinpeng Xu and Lin Zhou and Xin Xie and Mengyuan Zhang and Jingjing Wang},
	url={https://arxiv.org/abs/2610.01307}, 
}

@ARTICLE{Xu2022UAVChannel,
	author={Xu, Jinpeng and Cheng, Xiang and Bai, Lin},
	journal={IEEE Trans. Antennas Propag.}, 
	title={A 3-{D} Space-Time-Frequency Non-Stationary Model for Low-Altitude {UAV} mm{W}ave and Massive {MIMO} Aerial Fading Channels}, 
	year={Nov. 2022},
	volume={70},
	number={11},
	pages={10936-10950},
}

@article{He-CovertUAV-2026,
	author    = {Yingqi He and Jinpeng Xu and Lin Zhou and Jingjing Wang and Chunxiao Jiang},
	title     = {Channel inversion power control aided multi-user secret and covert {UAV} communications},
	journal   = {IEEE Trans. Wireless Commun.},
	volume    = {25},
	pages     = {11002--11018},
	month={Jan.},
	year      = {2026}
}

@ARTICLE{11180147,
  author={Wang, Jingjing and Long, Wei and Liu, Yizhong and Zhang, Xin and Zhang, Zheng and Deng, Robert H.},
  journal={IEEE Trans. Inf. Forensics Security},
  title={A Lightweight Consensus Mechanism for Large-Scale {UAV} Networking},
  year={2025},
  volume={20},
  number={},
  pages={10361-10375},
  doi={10.1109/TIFS.2025.3614501}
}

@INPROCEEDINGS{11591198,
  author={Xu, Jinpeng and Zhang, Shuowen},
  booktitle={Proc. IEEE Global Commun. Conf. Workshops (GLOBECOM Wkshps)},
  title={{MIMO} Radar Meets Polarization-Reconfigurable Antennas: A {BCRB} Perspective},
  year={2025},
  month={Dec.},
  pages={818-823},
  doi={10.1109/GCWkshps68340.2025.11591198}}

@article{en_covert_2025,
title = {Intelligent Covert Communication: Recent Advances and Future Research Trends},
journal = {Engineering},
volume = {44},
pages = {101-111},
year = {2025},
issn = {2095-8099},
doi = {10.1016/j.eng.2024.12.007},
author = {Li, Zan and Shi, Jia and Si, Jiangbo and Lv, Lu and Guan, Lei and Hao, Benjian and Tie, Zhuangzhuang and Wang, Danyang and Xing, Chengwen and Quek, Tony Q. S.}
}

@article{6g_ai_2026,
title = {A Systematic Perspective on Communication Innovations Toward {6G}},
journal = {Engineering},
volume = {31},
pages = {19-24},
year = {2023},
issn = {2095-8099},
doi = {https://doi.org/10.1016/j.eng.2023.03.011},
author = {Ge, Ning and Jiang, Chunxiao and Lu, Jianhua}
}

@article{Xu2026IntelligentFANETs,
	title={Intelligent Foundation Model-Enabled Low-Altitude {FANETs}: Integrating Communication, Control, Sensing, and Security},
	author={Xu, Jinpeng and Zhou, Lin and Zhang, Shuowen and Xie, Xin and Yuan, Weijie and Wang, Jingjing},
	journal={IEEE Internet of Things Mag.},
	volume={9},
	number={2},
	pages={84-92},
	month={Jan.},
	year={2026},
	publisher={IEEE}
}
\end{document}